\newcommand{\ALG}{\textit{ALG}}
\newcommand{\cost}{\textit{cost}}
\newcommand{\ie}{i.\,e.\,}
\newcommand{\OPT}{\textit{OPT}}
\newtheorem{theorem}{Theorem}
\newtheorem{proposition}[theorem]{Proposition}
\newtheorem{corollary}[theorem]{Corollary}
\newtheorem{lemma}[theorem]{Lemma}
\newtheorem{definition}{Definition}
\DeclarePairedDelimiter{\floor}{\lfloor}{\rfloor}
\title{The Infinite Server Problem\thanks{Supported by the ERC Advanced Grant 321171 (ALGAME) and by EPSRC.}}
\author[1]{Christian Coester}
\author[1]{Elias Koutsoupias}
\author[1]{Philip Lazos}
\affil[1]{Department of Computer Science, University of Oxford}
\begin{document}

\usdate
\maketitle

%\begin{center}
%\DTMnow
%\end{center}

\begin{abstract}
	We study a variant of the $k$-server problem, the infinite server problem,
	in which infinitely many servers reside initially at a particular
	point of the metric space and serve a sequence of requests. In the
	framework of competitive analysis, we show a surprisingly tight
	connection between this problem and the $(h,k)$-server problem, in
	which an online algorithm with $k$ servers competes against an offline
	algorithm with $h$ servers. Specifically, we show that the infinite
	server problem has bounded competitive ratio if and only if the
	$(h,k)$-server problem has bounded competitive ratio for some
	$k=O(h)$. We give a lower bound of $3.146$ for the competitive ratio
	of the infinite server problem, which implies the same lower bound for
	the $(h,k)$-server problem even when $k/h \to \infty$ and holds also for the line metric; the previous
	known bounds were 2.4 for general metric spaces and 2 for the line. For weighted trees and layered graphs we obtain upper bounds, although they depend on the depth. Of particular interest is the infinite server problem on the line, which we show to be equivalent to the seemingly easier case in which all requests are in a fixed bounded interval away
	from the original position of the servers. This is a special case of a more general reduction from arbitrary metric spaces to bounded subspaces. Unfortunately, classical approaches (double coverage and generalizations, work function algorithm, balancing algorithms) fail even for this special case.
\end{abstract}

\section{Introduction}
The $k$-server problem is a fundamental well-studied
online problem \cite{ManasseMS88,Koutsoupias09}.
In this problem $k$ servers serve a
sequence of requests. The servers reside at $k$ points of a metric
space $M$ and requests are simply points of $M$. Serving a request
entails moving one of the servers to the request. The objective is to
minimize the total distance traveled by the servers. The most
interesting variant of the problem is its online version, in which the
requests appear one-by-one and the online algorithm must decide how to
serve a request without knowing the future requests. It is known
that the deterministic $k$-server problem has competitive ratio
between $k$ and $2k-1$ for every metric space with at least $k+1$
distinct points \cite{ManasseMS88,KoutsoupiasP95}.

In this paper, we study the \emph{infinite server problem}, the
variant of the $k$-server problem in which there are infinitely many
servers, all of them initially residing at a given point, the
\emph{source}\footnote{We first learned about this problem from Kamal
	Jain \cite{jain}.}. At first glance it
may appear that the lower bound of $k$ for the $k$-server problem
would imply an unbounded competitive ratio for the infinite server
problem. But consider, for example, the version of
the $k$-server problem on uniform metric spaces (\ie the distance between any two points is $1$), and observe that the
infinite server problem has competitive ratio 1 for this case.

The infinite server problem is closely related to the $(h,k)$-server problem,
the resource augmentation version of the $k$-server problem in which
the online algorithm has $k$ servers and competes against an offline
algorithm for $h\leq k$ servers. This model is also known as \emph{weak
	adversaries} \cite{BansalEJKP15,Koutsoupias99}. One major open question 
in competitive
analysis is whether the $(h,k)$-server problem has bounded competitive
ratio when $k\gg h$. Bar-Noy and Schieber (see \cite[p.\,175]{BorodinE98}) showed that when
$h=2$, the competitive ratio on the line metric is $2$ for \emph{any} $k$, and recently,
Bansal et al.~\cite{BansalEJK17} showed a lower bound of 2.4 for the general case
$k\gg h$. Here we show a, perhaps surprising, tight connection between
the infinite server problem and the $(h,k)$-server problem, which
allows us to improve both lower bounds to 3.146.

The infinite server problem is also a considerable generalization of
the ski-rental problem, since the ski-rental problem is essentially a
special case of the infinite server problem when the metric space is
an isosceles triangle.
%Although in this work we focus on infinite
%metric spaces, we find very appealing the special case of the problem
%for finite metric spaces\footnote{As in the case of ski-rental, the
%  infinite server problem on finite metric spaces is interesting only
%  for the \emph{strictly} competitive algorithms, otherwise the competitive ratio
%  is 1. An algorithm is strictly competitive when there is no
%  additive constant in the defintion of competitive
%  ratio.}.
%Interestingly, we show that the general infinite server problem
%for arbitrary metric spaces can be reduced to a ski-rental version in
%which all requests are in a bounded-diameter metric space and the
%source is in fixed distance from every other point.

\subsection{Previous Work}
The $k$-server problem was first formulated by Manasse et 
al.~\cite{ManasseMS88}, to generalize a variety of online setting whose 
stepwise cost had a `metric'-like structure. They build on previous work by 
Sleator and Tarjan \cite{SleatorT85}, the genesis of competitive analysis, on 
the paging problem. This problem can be easily recast as a $k$-server 
instance for the uniform metric and was already known to be $k$-competitive. 

Manasse et al.~\cite{ManasseMS88} also showed that  the competitive ratio 
of the $k$-server problem is at least 
$k$ on any metric space with more than $k$ points. 
They then proposed the renowned $k$-server conjecture, stating that this 
bound is tight. This 
has been shown to be true for $k=2$ \cite{ManasseMS88} and for several 
special metric spaces 
\cite{ChrobakKPV91,ChrobakL91,KoutsoupiasP96,ManasseMS88,SleatorT85}.
A stream of refinements \cite{fiat1990competitive,bartal2000harmonic} 
lead to 
better 
competitive ratios for general metric 
spaces until \cite{KoutsoupiasP95}
showed that a competitive ratio of $2k-1$ 
can be achieved on any metric space. 
Chasing the competitive ratio for the deterministic (and randomized) 
$k$-server problem has been pivotal for the development of
competitive analysis.
For a more in depth view on the history of the $k$-server problem and 
further related work, we refer to \cite{Koutsoupias09}.

In the weak adversaries setting, significantly less is known. For the 
$(h,k)$-server problem, the exact competitive ratio is 
$\frac{k}{k-h+1}$ on uniform metrics (equivalent to the paging problem) 
\cite{SleatorT85} and weighted star metrics (equivalent to weighted paging) 
\cite{Young94}. Bansal et al. \cite{BansalEJK17} showed recently for weighted trees that the competitive ratio as 
$k/h\to\infty$ can be bounded by a constant depending on the depth of 
the tree. On general metrics, the $(h,k)$-server 
problem is still very poorly understood. No algorithm is known for general 
metrics that performs better than disabling the $k-h$ extra servers and 
using $h$ servers only. In fact, for the line it was shown 
\cite{BansalEJKP15,BansalEJK17} that the Double Coverage 
Algorithm and the Work Function Algorithm -- despite achieving the 
optimal competitive ratio of $h$ if $k=h$ \cite{ChrobakKPV91,BartalK04} 
-- perform strictly worse in the resource augmentation setting than 
disabling the $k-h$ extra servers and applying the same algorithm to $h$ 
servers only. For the case that $h$ is not fixed, the Work Function 
Algorithm was shown to be $2h$-competitive simultaneously against any 
number $h\le k$ of offline servers \cite{Koutsoupias99}.

In terms of lower bounds, it is known that unlike for (weighted) paging, the 
competitive ratio does not converge to $1$ on general metrics even as 
$k/h\to\infty$. Prior to this work, the best known lower bounds were $2$ 
on the line \cite[p.\,175]{BorodinE98} and $2.4$ on general metric spaces 
\cite{BansalEJK17}.

The closest publication to this work is by Csirik et al.~\cite{CsirikINSW01}, which studies a problem that is essentially the special case of the infinite server problem on
the uniform metric space augmented by a far away source. It is cast as a paging
problem where new cache slots can be bought at a fixed price per unit
and gives matching upper and lower bounds of $\approx3.146$ on the
competitive ratio.

\subsection{Our Results}
Our main result is an equivalence theorem between the infinite server 
problem and the $(h,k)$-server problem, presented in 
Section~\ref{sec:equiv}. It 
states that the infinite server 
problem is competitive on every metric space if and only if the 
$(h,k)$-server problem is $O(1)$-competitive on every metric space as 
$k/h\to\infty$. We show further that it is not even necessary to let $k/h$ 
converge to infinity because in the positive case, there must also exist 
some $k=O(h)$ for which the latter is true. The theorem holds also if 
``every metric space'' is replaced by ``the real line''.

In Section~\ref{sec:Lbs} we present upper and lower bounds on the competitive 
ratio of the infinite server problem on a variety of metric spaces.  Extending 
the work in \cite{CsirikINSW01}, we present a tight lower bound for non-discrete spaces, which is then turned into a $3.146$ lower bound for the $(h,k)$ 
setting. To our knowledge, this is the largest bound on the weak 
adversaries setting for any metric space, as $k/h \rightarrow \infty$. We  
show how recent work by Bansal et al.~\cite{BansalEJK17} can be adapted to 
give an upper bound on the competitive ratio of the infinite server problem 
on bounded-depth weighted trees. We also consider
layered graph metrics, which are equivalent (up to a factor of 2) to general graph metrics. We have not settled the case for 
their competitive ratio, but we present a natural algorithm with tight analysis
and pose challenges for further research. The main open question is whether there exists a metric space on which the infinite server problem is not competitive.

In Section~\ref{sec:failed} we show how a variety of known
algorithms such as the work function and balancing algorithms fail for the infinite server problem, even on the real line. We focus in particular on a class of speed-adjusted variants of the well-known double coverage algorithm.

Finally, we present a useful reduction from arbitrary metric spaces to bounded subspaces in Section~\ref{sec:redBounded}. In particular, the infinite server problem on the line is competitive if and only if it is competitive for the special case where requests are restricted to some bounded interval further away from the source.

\subsection{Preliminaries}

Let $M=(M,d)$ be a metric space and let $s$ be a point of $M$. In the
\emph{infinite server problem on $(M,s)$}, an unbounded number of servers starts
at point $s$ and serves a finite sequence
$\sigma=(\sigma_0=s,\sigma_1,\sigma_2,\ldots, \sigma_m$) of requests
$\sigma_i\in M$. Serving a request entails moving one of the servers
to it. The goal is to minimize the total distance traveled by the
servers. 

We drop $s$ in the notation if the location of the source is not relevant or understood. We refer to the action of moving a server from the source to another point as \emph{spawning}. Throughout this work we use the letter $d$ for the metric
associated with the metric space.

In the online setting, the requests are revealed one by one and need to be served immediately without knowledge of future requests. All algorithms considered in this
paper are deterministic. An algorithm is called \emph{lazy} if it moves only one server to
serve a request at an unoccupied point and moves no server if the
requested point is already covered. An algorithm is called \emph{local} \cite{cohen2014pricing} if it moves a server from $a$ to $b$ only if there is no server at some other point $c$ on a shortest path from $a$ to $b$, \ie with $d(a,b)=d(a,c)+d(c,b)$. It is easy to see that any algorithm can be turned into a lazy and local algorithm without increasing its cost (\ie the total distance traveled by all servers).

For an algorithm $\ALG$, we denote by $\ALG(\sigma)$ its cost on the request sequence $\sigma$. Similarly, we write $\OPT(\sigma)$ for the optimal (offline) cost. 

An online algorithm $\ALG$ is \emph{$\rho$-competitive} for $\rho\ge 1$ 
if $\ALG(\sigma)\le\rho\OPT(\sigma)+c$ for all $\sigma$, where $c$ is a 
constant independent of $\sigma$. The \emph{competitive ratio} of an 
algorithm is the infimum of all such $\rho$. We say that an algorithm is 
\emph{competitive} if it is $\rho$-competitive for some $\rho$. We also 
call an online problem itself ($\rho$-)competitive if it admits such an 
algorithm. If the additive term $c$ in the definition is $0$, then the 
algorithm is also called \emph{strictly 
	$\rho$-competitive}~\cite{emek2009additive}.

The \emph{$(h, k)$-server problem on $M$} is defined like the infinite server problem except that the number of servers is $k$ for the online algorithm and $h$ for the optimal (offline) algorithm against whom it is compared in the definition of competitiveness. For this problem, the servers are not required to start at the same point, although a different initial configuration would only affect the additive term $c$. The problem is interesting only when $k\geq h$. The
case $h=k$ is the standard $k$-server problem and the case $k\geq h$
is known as the \emph{weak adversaries} model. One major open problem
is determine the competitive ratio of the $(h,k)$-server problem as
$k$ tends to infinity.

We will sometimes write $\OPT_h$ and $\OPT_\infty$ for the optimal offline algorithm, where the index specifies the number of servers available.

The following two propositions will be useful later in the paper.

\begin{proposition}\label{prop:commonCR}
	If for every metric space there exists a competitive algorithm for the
	infinite server problem, then there exists a universal competitive
	ratio $\rho$ such that the infinite server problem is \emph{strictly}
	$\rho$-competitive on every metric space.
\end{proposition}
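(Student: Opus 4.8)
The plan is to prove two separate things: that the per-space competitive ratio $\rho_M$ (the infimum, over online algorithms, of competitive ratios achievable on $(M,s)$) is bounded by a space-independent constant $\rho:=\sup_M\rho_M<\infty$, and that this bound can be turned into \emph{strict} competitiveness at the cost of an arbitrarily small additive increase of the ratio. The engine for both parts is a gluing construction. Given metric spaces $(M_i,s_i)$, form their \emph{wedge} $W=\bigvee_i M_i$ by identifying all sources to a single source $s$ and setting $d(x,y)=d(x,s)+d(s,y)$ for points $x,y$ in different components; this is a metric space, and the retraction $\pi_i\colon W\to M_i$ that fixes $M_i$ and maps everything else to $s$ is $1$-Lipschitz. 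First I would record scale-invariance: rescaling all distances of $(M,s)$ by $\lambda>0$ multiplies costs and the additive constant by $\lambda$ but changes neither the decisions of an algorithm nor the ratio, so $\rho_{\lambda M}=\rho_M$.

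The key lemma is that competitiveness descends from $W$ to each component with the additive constant preserved. For a request sequence $\sigma$ living entirely in one component $M_i$, the isometric inclusion gives $\OPT_W(\sigma)\le\OPT_{M_i}(\sigma)$, while applying $\pi_i$ to an optimal $W$-schedule produces an $M_i$-schedule of no larger cost that still serves $\sigma$, so in fact $\OPT_W(\sigma)=\OPT_{M_i}(\sigma)$. Likewise, projecting any online algorithm $A$ on $W$ through $\pi_i$ yields an online algorithm on $M_i$ whose cost is at most that of $A$ (since $\pi_i$ is $1$-Lipschitz, no move lengthens) and which serves every request (requests lie in $M_i$, which $\pi_i$ fixes). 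Hence a $\rho'$-competitive algorithm on $W$ with additive constant $c$ restricts to a $\rho'$-competitive algorithm on each $M_i$ with the \emph{same} constant $c$, giving $\rho_{M_i}\le\rho'$.

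To obtain the universal bound, suppose for contradiction that $\sup_M\rho_M=\infty$. By hypothesis every $\rho_M$ is finite, so I can choose spaces with $\rho_{M_i}\ge i$ and glue them into $W=\bigvee_i M_i$. By hypothesis $W$ itself admits a competitive algorithm, say $\rho'$-competitive; but the lemma forces $\rho_{M_i}\le\rho'<\infty$ for every $i$, contradicting $\rho_{M_i}\ge i$. Therefore $\rho:=\sup_M\rho_M<\infty$.

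The more delicate part, and the main obstacle, is to remove the additive constant while keeping a space-independent ratio, since restriction alone leaves a constant depending on $M$. I would proceed in two moves. Fix $M$ and $\eta>0$. Applying the lemma to the wedge of rescaled copies $\bigvee_{j\ge0}2^jM$ (which has competitive ratio at most $\rho$ and hence admits a $(\rho+\eta/2)$-competitive algorithm with some constant $C$), restricting to the copy $2^jM$ and rescaling back to $M$ divides the constant by $2^j$; thus for every $\epsilon>0$ there is a $(\rho+\eta/2)$-competitive algorithm $A_\epsilon$ on $M$ with additive constant at most $\epsilon$. Finally, to convert a small constant into strictness, I would let the target algorithm do nothing (at zero cost, matching $\OPT$) until the first request $\sigma_{i_0}$ at distance $\delta:=d(s,\sigma_{i_0})>0$ from the source, and from then on run $A_\epsilon$ with $\epsilon=\tfrac{\eta}{2}\delta$. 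Since any schedule must move a server from $s$ to $\sigma_{i_0}$, we have $\OPT(\sigma)\ge\delta$, whence $A(\sigma)\le(\rho+\tfrac{\eta}{2})\OPT(\sigma)+\tfrac{\eta}{2}\delta\le(\rho+\eta)\OPT(\sigma)$; this is strict $(\rho+\eta)$-competitiveness, and taking $\eta=1$ gives the universal strict ratio $\rho+1$.
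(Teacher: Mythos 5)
Your proof is correct and follows essentially the same route as the paper: glue countably many spaces at a common source to derive a universal ratio by contradiction, then use scaling to eliminate the additive constant. The only difference is that you spell out in full the step the paper dismisses as ``a scaling argument shows that also $c=0$ works'' (via the wedge of rescaled copies and the wait-for-the-first-request trick), which is a faithful elaboration rather than a different approach.
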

\begin{proof}
	We first show the existence of $\rho$ such that the infinite server problem 
	is $\rho$-competitive (strictly or not) on every metric space. Suppose such 
	$\rho$ does not exist, then for every $n\in\mathbb{N}$ we can find a 
	metric space $M_n$ containing some point $s_n$ such that the infinite 
	server problem on $(M_n,s_n)$ is not $n$-competitive. Consider the metric 
	space obtained by taking the disjoint union of all spaces $M_n$ and gluing 
	all the points $s_n$ together. The infinite server problem would not be 
	competitive on this metric space, in contradiction to the assumption.
	
	Analogously we can also find a universal constant $c$ that works for
	all metric spaces as additive constant in the definition of
	$\rho$-competitiveness. A scaling argument shows that also $c=0$
	works.
\end{proof}

With a very similar argument we get:

\begin{proposition}\label{prop:commonCR-hk}
	Let $k=k(h)$ be a function of $h$. Suppose that for every metric space $M$ and for all $h$ there exists an $O(1)$-competitive algorithm for the
	$(h,k)$-server problem on $M$. Then there exists a universal competitive ratio
	$\rho$ such that the $(h,k)$-server problem is strictly $\rho$-competitive on
	every metric space if all servers start at the same
	point.
\end{proposition}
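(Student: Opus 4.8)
The plan is to mimic the proof of Proposition~\ref{prop:commonCR} as closely as possible, since the statement explicitly says ``with a very similar argument.'' The two ingredients to transplant are (i) a gluing construction that combines a hypothetical sequence of bad metric spaces into a single space, deriving a contradiction with the hypothesis, and (ii) a scaling argument that removes the additive constant. The main new wrinkle compared to Proposition~\ref{prop:commonCR} is that here $k=k(h)$ depends on $h$, so I must be careful about \emph{which} value of $h$ (and hence $k$) is used on each glued copy, and I must ensure the final universal $\rho$ does not secretly depend on $h$.

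First I would establish the existence of a finite universal $\rho$ bounding the competitive ratio across all metric spaces and all $h$. Suppose no such $\rho$ exists. The hypothesis gives, for every metric space $M$ and every $h$, an $O(1)$-competitive algorithm for the $(h,k(h))$-server problem on $M$; by the definition of $O(1)$ this means that for each $M$ there is a constant $\rho_M$ (possibly depending on $M$, but uniform over $h$) bounding the competitive ratio. If the supremum of the $\rho_M$ over all metric spaces were finite we would be done, so assuming it is infinite, I pick for each $n\in\mathbb{N}$ a metric space $M_n$ and a value $h_n$ such that the $(h_n,k(h_n))$-server problem on $M_n$ is not $n$-competitive (no matter which online algorithm with $k(h_n)$ servers is used). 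I then form the disjoint union of the $M_n$ with all sources glued to a single common point $s$, exactly as in Proposition~\ref{prop:commonCR}.

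The key step is to argue that competitiveness fails on this glued space, and here is where the dependence $k=k(h)$ must be handled carefully. Fix any $h$ and let $k=k(h)$. Because all servers start at the glued point $s$, a bad request sequence for the $(h_n,k(h_n))$ problem living inside the copy $M_{h}$ (i.e.\ the copy indexed by our chosen $h$) can be replayed on the glued space: the online algorithm's $k$ servers can be routed into $M_h$ at the cost of at most $k\cdot\operatorname{diam}$-type additive terms, and servers straying into other copies only waste movement, so the online cost is at least that of the best online algorithm confined to $M_h$ minus an additive constant, while $\OPT_h$ can confine all $h$ of its servers to $M_h$. Choosing the index so that $M_h$ is the copy witnessing non-$n$-competitiveness for large $n$ forces the competitive ratio on the glued space to be unbounded as a function of the request sequence, contradicting the hypothesis that the glued space admits an $O(1)$-competitive algorithm. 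The subtle point I expect to be the main obstacle is matching indices consistently: the glued copies are indexed by a single parameter, but the counterexamples are indexed by pairs $(M_n,h_n)$, so I must set up the indexing so that for the fixed $h$ used by the online algorithm on the glued space, the relevant copy is genuinely a bad instance for that same $h$. This is handled by indexing the copies by $h$ itself (one copy $M_{(h)}$ per value of $h$, chosen to be $n$-bad for some $n\to\infty$), so that whatever $h$ the online algorithm commits to, the adversary attacks inside $M_{(h)}$.

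Finally I would remove the additive constant exactly as in Proposition~\ref{prop:commonCR}: having fixed a universal $\rho$ with additive constant $c$, I scale the metric by an arbitrarily large factor $\lambda$. Since both $\ALG(\sigma)$ and $\OPT(\sigma)$ scale linearly in $\lambda$ while $c$ stays fixed, the inequality $\ALG(\sigma)\le \rho\,\OPT(\sigma)+c$ on the scaled space yields $\ALG(\sigma)\le \rho\,\OPT(\sigma)+c/\lambda$ on the original, and letting $\lambda\to\infty$ gives strict $\rho$-competitiveness. I would remark only briefly on this scaling, since it is identical to the earlier proposition, and conclude that the universal strict competitive ratio $\rho$ exists whenever all servers start at a common point.
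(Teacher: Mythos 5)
Your proposal is correct and follows essentially the same route the paper intends: the paper proves this proposition only by reference (``with a very similar argument'' as Proposition~\ref{prop:commonCR}), namely gluing a sequence of putative counterexample spaces at their common source and then removing the additive constant by scaling, which is exactly what you do, with the only new content being the bookkeeping of the pairs $(M_n,h_n)$ so that the contradiction is with the uniformity of the $O(1)$ bound over $h$ on the glued space. Two phrasings are slightly off but harmless --- the failure on the glued space is ``unbounded as $h=h_n$ varies'' rather than ``as a function of the request sequence,'' and the projection of stray servers onto the attacked copy needs no additive correction at all --- and your scaling step is at the same level of detail as the paper's own.
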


\section{Equivalence of Infinite Servers and Weak Adversaries}\label{sec:equiv}

The main result of this section is the following tight
connection between the infinite server problem and the weak
adversaries model.

\begin{theorem}\label{thm:equiv}
	The following are equivalent:
	\begin{enumerate}
		\item The infinite server problem is competitive.\label{it:equivInfComp}
		\item The $(h,k)$-server problem is $O(1)$-competitive as
		$k/h\to\infty$.\label{it:equivWeakComp}
		\item For each $h$ there exists $k=O(h)$ so that the $(h,k)$-server problem is $O(1)$-competitive.\label{it:equivkOhComp}
	\end{enumerate}
	The three statements above are also equivalent if we fix the metric space
	to be the real line.
\end{theorem}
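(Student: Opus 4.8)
The plan is to prove the cycle ``third statement'' $\Rightarrow$ ``second statement'' $\Rightarrow$ ``first statement'' $\Rightarrow$ ``third statement,'' which yields all equivalences. Throughout I will normalise using the two propositions: by \cref{prop:commonCR} and \cref{prop:commonCR-hk} I may assume a single universal constant $\rho$ and work with \emph{strict} competitiveness (additive term $0$), and in the $(h,k)$ setting I may assume all servers start at the source $s$, so that $\OPT_h$ always refers to the optimum of $h$ servers launched from $s$. Two elementary facts drive everything: $\OPT_h$ is non-increasing in $h$ with $\OPT_\infty=\min_h\OPT_h$ (using more servers never hurts), and $\OPT_h(\sigma)=\OPT_\infty(\sigma)$ as soon as $h$ is at least the number of distinct requests. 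Both reductions below operate on the same metric space in which they are given and only rescale or restrict request sequences, so the ``real line'' version follows by carrying out the identical arguments with $M=\mathbb{R}$ fixed (the self-similarity of the line supplies the universal $\rho$ and the scaling-to-strictness step directly, without the gluing construction of \cref{prop:commonCR}).

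The implication ``third'' $\Rightarrow$ ``second'' is pure monotonicity: if some $k=O(h)$ admits a $\rho$-competitive $(h,k)$-algorithm, then for every $k'\ge k$ the online player with $k'$ servers simply parks the surplus $k'-k$ servers and runs the same algorithm, so the competitive ratio stays $\le\rho$ as $k/h\to\infty$. For ``second'' $\Rightarrow$ ``first,'' note first the clean \emph{offline-aware} core: given the full sequence $\sigma$ with $m$ requests, running the strictly $\rho$-competitive $(m,k(m))$-algorithm from $s$ (all $k(m)$ servers are available for free in the infinite-server problem) costs at most $\rho\,\OPT_m(\sigma)=\rho\,\OPT_\infty(\sigma)$, since $m$ servers already match $\OPT_\infty$. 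The only missing ingredient is that an online algorithm does not know $m$ (equivalently, does not know how many servers $\OPT_\infty$ effectively uses). I would remove this by a doubling scheme: proceed in phases $j=0,1,2,\dots$, in phase $j$ run the strictly $\rho$-competitive $(2^j,k(2^j))$-algorithm from $s$, and advance to phase $j+1$ once the number of requests seen in the phase exceeds $2^j$. Within a phase the guess $2^j$ dominates the phase's request count, so by the normalisation the phase cost is at most $\rho$ times the infinite-server optimum of the phase's requests; when changing phases I recall the current servers to $s$, at cost bounded by their total displacement and hence by the previous phase's cost.

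For ``first'' $\Rightarrow$ ``third'' I would run the strictly $\rho$-competitive infinite-server algorithm $\ALG$ (which satisfies $\ALG(\sigma)\le\rho\,\OPT_\infty(\sigma)\le\rho\,\OPT_h(\sigma)$) but \emph{cap} the number of online servers at some $k=O(h)$, since in the $(h,k)$ problem we only ever compete against $\OPT_h$. Concretely, the real $k$ servers mirror the $k$ ``most useful'' simulated servers of $\ALG$, and whenever $\ALG$ would keep more than $k$ servers deployed, the least useful real server is retired back towards $s$ and reused; the target $k=\Theta(h)$ is exactly what this capping is designed to produce, and it is this step that witnesses the surprising ``$k=O(h)$ suffices'' half of the theorem. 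The key point to verify is that, measured against an $h$-server adversary, a constant multiple of $h$ deployed servers already captures almost all of the benefit of $\ALG$'s unbounded fleet, so that the retirement traffic is bounded by $O(\OPT_h)$.

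The main obstacle in both non-trivial directions is the same in spirit: bounding the \emph{transition overhead} — the server recalls in the doubling scheme and the server retirements in the capping scheme — by a constant factor of the offline optimum rather than by a growing factor. A naive per-phase bound only gives an overhead proportional to the number of phases (hence to $\log m$ or $\log$ of the offline server count), which would destroy constant competitiveness; the technical heart of the proof is an amortised charging that absorbs these transitions into $O(\OPT_\infty)$, crucially exploiting that under the second/third statement the per-instance competitiveness is tailored to the metric at hand. I expect this amortisation, rather than any of the structural reductions, to be where the real work lies.
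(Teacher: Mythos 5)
Your skeleton (prove the cycle, normalise to strict competitiveness via Propositions~\ref{prop:commonCR} and~\ref{prop:commonCR-hk}) matches the paper, and your ``third $\Rightarrow$ second'' is the same trivial step. But both non-trivial arrows have genuine gaps. For ``second $\Rightarrow$ first,'' the doubling scheme fails for a reason that no amortisation of \emph{transition} costs can repair: the problem is not the recall overhead but the fact that $\sum_j \OPT_\infty(\sigma_j;s)$ can exceed $\OPT_\infty(\sigma;s)$ by a factor equal to the number of phases. Take a two-point space $\{s,p\}$ with $d(s,p)=1$ and request $p$ repeatedly: $\OPT_\infty=1$, but every phase restarts from $s$ and must re-spawn a server to $p$, so your algorithm pays $\Theta(\log m)$. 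The paper avoids phases entirely (Theorem~\ref{thm:reductionToWeak}): it fixes $k=k(h)$ for \emph{all} $h$ simultaneously and uses a K\"onig's-lemma argument to extract, for every $\sigma$, an infinite set $H(\sigma)$ of values of $h$ on which the algorithms $\ALG_{k(h)}$ all serve $\sigma$ identically, with $H(\sigma r)\subseteq H(\sigma)$. The resulting single algorithm $\ALG_\infty$ coincides with $\ALG_{k(h)}$ for some $h$ exceeding $|\sigma|$, where $\OPT_h=\OPT_\infty$, giving strict $\rho$-competitiveness with no loss and no restarts.

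For ``first $\Rightarrow$ third,'' your ``cap at $k$ and retire the least useful server'' is a restatement of the claim rather than a proof: the assertion that $O(h)$ deployed servers capture almost all the benefit of the unbounded fleet, and that retirement traffic is $O(\OPT_h)$, is precisely what needs to be established, and choosing which server to retire online is itself a $k$-server-type difficulty. A competitive infinite-server algorithm may legitimately spawn one server per distinct request, far more than $O(h)$. The paper's mechanism (Theorem~\ref{thm:redWeakToInf}) is different and is where the constant $k=(1+1/\epsilon)\tilde\rho h$ actually comes from: when the simulated $\ALG_\infty$ would spawn a $(k+1)$-st server, \emph{all} servers return to $s$ and the simulation restarts on the modified metric $M_{s\oplus w_{i+1}}$ with $w_{i+1}=\epsilon\,\OPT_h(\sigma_i;s)/h$. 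Pushing the virtual source away both discourages spawning and creates a saved term $kw_i$ (the online servers never actually traverse the virtual distance), and the choice of $w_i$ and $k$ makes the resulting bound telescope to $(3+\epsilon)\tilde\rho\,\OPT_h(\sigma;s)$. Without this telescoping device, I do not see how your capping scheme controls the cost across restarts.
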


The implication
``\ref{it:equivkOhComp}$\implies$\ref{it:equivWeakComp}'' is trivial.
The proof of the equivalence theorem consists in its core of two
reductions. Theorem~\ref{thm:reductionToWeak} contains the easier of
the two reductions, which is from the infinite server problem to the
$k$-server problem against weak adversaries
(``\ref{it:equivWeakComp}$\implies$\ref{it:equivInfComp}''). By
Propositions~\ref{prop:commonCR} and \ref{prop:commonCR-hk}, it suffices to
consider only strictly competitive
algorithms. Theorem~\ref{thm:redWeakToInf} proves essentially the
inverse for general metric spaces, and
Theorem~\ref{thm:redWeakInfLine} specializes it to the line
(``\ref{it:equivInfComp}$\implies$\ref{it:equivkOhComp}'').

As a corollary of the theorem we get the non-trivial implication
``\ref{it:equivWeakComp}$\implies$\ref{it:equivkOhComp}'', a
potentially useful statement towards resolving the major open problem
about weak adversaries: ``Is Statement \ref{it:equivWeakComp} true?''
This highlights the importance of the infinite server problem.

\begin{theorem}\label{thm:reductionToWeak}
	Fix a metric space $M$ and consider algorithms with all servers
	starting at some $s\in M$. If for every $h$ there exists $k=k(h)$ such
	that the $(h,k)$-server problem on $M$ is strictly $\rho$-competitive, for
	some constant $\rho$, then there exists a strictly $\rho$-competitive
	online strategy for the infinite server problem on $M$.
\end{theorem}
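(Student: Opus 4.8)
The plan is to build a single online algorithm for the infinite server problem as a limit of the family of $(h,k(h))$-server algorithms, exploiting the fact that on short sequences the resource-augmented optimum coincides with the infinite-server optimum. Write $A_N$ for the given strictly $\rho$-competitive algorithm for the $(N,k(N))$-server problem, which I regard as an infinite-server algorithm that simply never uses more than $k(N)$ of its servers. The first observation is that on any request sequence $\sigma$ of length at most $N$ one has $\OPT_\infty(\sigma)=\OPT_N(\sigma)$: a lazy offline optimum never moves more servers than there are requests, so $N$ servers starting at the source already realize the infinite-server optimum. Hence each $A_N$ is, on sequences of length at most $N$, strictly $\rho$-competitive for the infinite server problem, since $A_N(\sigma)\le\rho\,\OPT_N(\sigma)=\rho\,\OPT_\infty(\sigma)$.

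Next I would pass to lazy (and local) versions of the $A_N$, which is free by the remark in the preliminaries. The key structural point is that a lazy algorithm only ever moves servers onto requested points, so after serving a prefix $(\sigma_1,\dots,\sigma_{t-1})$ all of its servers sit in the finite set $\{s,\sigma_1,\dots,\sigma_{t-1}\}$; consequently, when the next request $\sigma_t$ arrives, the only decision is from which of these finitely many locations to draw the serving server. This lets me encode any lazy infinite-server algorithm as a point of the product space $X=\prod_\pi C_\pi$, where $\pi$ ranges over finite request sequences and each coordinate set $C_\pi$ is finite (indices into $\{s,\sigma_1,\dots,\sigma_{t-1}\}$, with a robust fallback to the source). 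By Tychonoff's theorem $X$ is compact.

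I would then extract the desired algorithm by a finite-intersection argument. The sets $\overline{\{A_{N'}:N'\ge N\}}$ form a nested family of nonempty closed subsets of $X$, so their intersection contains some point $B$, which I take as the infinite-server algorithm. For an arbitrary sequence $\sigma$ of length $m$, its prefixes correspond to finitely many coordinates, each with a finite (hence discrete) choice set; since $B$ lies in the closure of $\{A_{N'}:N'\ge m\}$ and the projection onto these finitely many coordinates has finite, hence closed, image, the restriction of $B$ must equal that of some concrete $A_{N'}$ with $N'\ge m$. Thus $B$ plays exactly like $A_{N'}$ on $\sigma$, and therefore $B(\sigma)=A_{N'}(\sigma)\le\rho\,\OPT_{N'}(\sigma)=\rho\,\OPT_\infty(\sigma)$, yielding strict $\rho$-competitiveness with the very same constant $\rho$.

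The main obstacle is making this limiting construction rigorous without losing constants. A naive phase/doubling strategy that restarts the $(h,k(h))$-algorithm with a larger budget re-pays the cost of reaching the current configuration from the source and only yields an $O(\rho)$ bound, which is why I avoid it. The compactness route introduces no overhead, but it hinges entirely on the observation that laziness confines all servers to previously requested points, making the strategy space a product of finite sets; the delicate part is to set up the encoding $C_\pi$ carefully so that every point of $X$ is a legitimate online algorithm and so that coordinate-wise agreement of $B$ and $A_{N'}$ on the prefixes of $\sigma$ genuinely forces identical play, and hence identical cost, by induction along those prefixes.
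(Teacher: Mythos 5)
Your proposal is correct and is essentially the paper's own argument: the paper also passes to lazy algorithms to get finitely many ways of serving each request, extracts a limit algorithm by inductively selecting nested infinite sets of indices $h$ on which the $\ALG_{k(h)}$ agree (explicitly ``in a manner reminiscent of K\"onig's lemma''), and concludes via $\OPT_\infty(\sigma)=\OPT_h(\sigma)$ for $h$ exceeding the length of $\sigma$. Your Tychonoff formulation on a product of finite discrete choice sets is just the topological phrasing of that same compactness argument, with the same constant $\rho$ and no loss.
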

\begin{proof}
	Let $\ALG_{k(h)}$ denote an online algorithm with $k(h)$ servers that
	is strictly $\rho$-competitive against an optimal algorithm $\OPT_h$
	for $h$ servers, \ie
	\begin{align}\label{eq:redToWeak}
		\ALG_{k(h)}(\sigma)\leq \rho \OPT_h(\sigma)
	\end{align}
	for every request sequence $\sigma$. Without loss of generality,
	algorithm $ALG_{k(h)}$ is lazy.
	
	For every request sequence $\sigma$, consider the equivalence relation
	$\equiv_{\sigma}$ on natural numbers in which $h\equiv_{\sigma} h'$ if
	and only if $\ALG_{k(h)}(\sigma)$ and $\ALG_{k(h')}(\sigma)$ serve
	$\sigma$ in exactly the same way (\ie, make exactly the same moves).
	To every $\sigma$, we associate an equivalence class $H(\sigma)$ of
	$\equiv_{\sigma}$ that satisfies
	\begin{itemize}
		\item $H(\sigma)$ is infinite,
		\item $H(\sigma r)\subseteq H(\sigma)$, for every request $r$. 
	\end{itemize}
	This is done inductively in the length of $\sigma$ (in a manner
	reminiscent of K\"onig's lemma) as follows: For the
	base case when $\sigma$ is the empty request sequence,
	$H(\sigma)=\mathbb{N}$. For the induction step, suppose that we have
	defined $H(\sigma)$. Consider the equivalence classes of
	$\equiv_{\sigma r}$, a refinement of the equivalence classes of
	$\equiv_{\sigma}$. Since there are only finitely many possible ways
	to serve $r$, they partition $H(\sigma)$ into finitely many
	parts. At least one of these parts is infinite and we select it to be
	$H(\sigma r)$; if there is more than one such sets, we select one
	arbitrarily, say the lexicographically first.
	
	Given such a mapping $H$, we define the online algorithm
	$\ALG_{\infty}$ which serves every $\sigma$ in the same way as all the
	online algorithms $\ALG_{k(h)}$ for $h\in H(\sigma)$. The
	second property of $H$ guarantees that $\ALG_{\infty}$ is a
	well-defined online algorithm. 
	
	By construction, $\ALG_{\infty}(\sigma)=\ALG_{k(h)}(\sigma)$ for every
	$h\in H(\sigma)$. To finish the proof, observe that since $H(\sigma)$
	is infinite, it contains some $h$ greater than the length of $\sigma$,
	and for such an $h$ we have
	$OPT_\infty(\sigma)=OPT_h(\sigma)$. Substituting these to
	\eqref{eq:redToWeak}, we see that $\ALG_{\infty}$ is strictly
	$\rho$-competitive.
\end{proof}

%Note that the above theorem is purely information theoretic and we
%do not require the described strategy to be computable. It seems
%reasonable though that for natural algorithms $\ALG_{k(h)}$, the
%induced strategy for the infinite server problem will be similarly
%natural and computable. For the case that there exists a positive
%minimal distance between $s$ and any other point of the metric space
%we mention that a more involved reduction yields a computable
%algorithm for the infinite server problem with a competitive ratio
%within a constant of the competitive ratio of the $(h,k)$-server
%problems, assuming there exists a single algorithm that solves the
%$(h,k)$-server problem for any $h$ if $h$ is given as an input.

We now show the reduction from the $k$-server problem against weak
adversaries to the infinite server problem on general metric spaces.

\begin{theorem}\label{thm:redWeakToInf}
	If the infinite server problem on general metric spaces is strictly $\tilde{\rho}$-competitive, then
	there exists a constant $\rho$ such that the $(h,k)$-server problem is
	$\rho$-competitive, for $k=O(h)$. In particular, for every
	$\epsilon>0$, we can take $\rho=(3+\epsilon)\tilde{\rho}$ and any
	$k\geq (1+1/\epsilon)\tilde{\rho} h$.
\end{theorem}
\begin{proof}
	Fix some metric space $M$ and a point $s\in M$. We will describe a
	strictly $\rho$-competitive algorithm for the $(h,k)$-server problem
	on $M$ for the case that all servers start at $s$. This implies a (not necessarily strictly) $\rho$-competitive
	algorithm for any initial configuration.
	
	The idea is to simulate a strictly $\tilde{\rho}$-competitive infinite
	server algorithm, but whenever it would spawn a $(k+1)$-st server, we
	bring all servers back to the origin and restart the
	algorithm. The problem is that the overhead cost for returning the
	servers to the origin, may be very high. To compensate for this, we
	assume that every time the servers return to the origin, they pretend
	to start from a different point further away from the origin. This
	motivates the following notation:
	
	\begin{definition}
		Given a metric $M$, a point $s\in M$, and a value $w\ge 0$, we
		will use the notation $M_{s\oplus w}$ to denote the metric derived
		from $M$ when we increase the distance of $s$ from every other point
		by $w$; we will also denote the relocated point by $s\oplus w$.
	\end{definition}
	
	Let $\ALG_{\infty}$ denote a strictly $\tilde{\rho}$-competitive
	online algorithm for the infinite server problem. We now define an online algorithm
	$\ALG_k$ for $k$ servers (all starting at $s$). We will make use of
	the notation $A(\sigma; s)$ to denote the cost of algorithm $A$ to
	serve the request sequence $\sigma$ when all servers start at
	$s$.
	
	\begin{definition}[$\ALG_{k}$ derived from $\ALG_{\infty}$]
		Algorithm $\ALG_k$ runs in phases with the initial phase 
		being the
		$0$th phase. At the beginning of every phase, all servers of $\ALG_k$
		are at $s$. In every phase $i$, the algorithm simulates the infinite
		server algorithm $\ALG_{\infty}$, whose servers start at $s\oplus w_i$
		for some $w_i\geq 0$. The parameters $w_i$ are determined online, and
		initially $w_0=0$.
		% At the beginning of the $i$-th phase, all servers of $\ALG_{k,s}$ are at
		% $s$ and all servers of $\ALG_{\infty,s\oplus w_i}$ are at
		% $s\oplus w_i$.
		Whenever $\ALG_{\infty}$ spawns a server from $s\oplus w_i$, algorithm
		$\ALG_k$ spawns a server from $s$.
		
		The phase ends just before $\ALG_{\infty}$ spawns its
		$(k+1)$-st server or when the request sequence ends. In the former
		case, all servers of $\ALG_k$ return to $s$ to start the
		$(i+1)$-st phase. To determine the starting point of the simulated
		algorithm of the next phase, we set
		\begin{align}\label{eq:redDefWi}
			w_{i+1} = \epsilon \, \frac{\OPT_h(\sigma_{i}; s)}{h}\,,
		\end{align}
		where $\sigma_i$ is the sequence of requests during phase $i$.
	\end{definition}
	
	Let $n$ be the number of phases. The cost  of $\ALG_k$ for the requests 
	in 
	phase $i<n$ is $\ALG_{\infty}(\sigma_i; s\oplus w_i) - k w_i$; the last
	term is subtracted because the $k$ servers do not have to actually
	travel the distance between $s\oplus w_i$ and $s$. However for the
	last phase no such term can be subtracted since we do not know how
	many servers are spawned during the phase, and we can only bound the
	cost from above by $\ALG_{\infty}(\sigma_n; s\oplus w_n)$. The cost of
	returning the servers to $s$ at the end of a phase can at most
	double the cost during the phase.
	
	From this, we see that the total cost of $\ALG_k$ in phase $i$ is
	\begin{align*}
		\cost_i & \leq
		\begin{cases}
			2\left(\ALG_{\infty}(\sigma_i; s\oplus w_i)-k w_i\right) & \text{for
				$i<n$} \\
			\ALG_{\infty}(\sigma_n; s\oplus w_n) & \text{for $i=n$\,.}
		\end{cases}
	\end{align*}
	
	Since $\ALG_{\infty}$ is strictly $\tilde{\rho}$-competitive, we have
	\begin{align*}
		\ALG_{\infty}(\sigma_i; s\oplus w_i)
		&\le \tilde{\rho}\,\OPT_{\infty}(\sigma_i; s\oplus w_i)\\
		&\le \tilde{\rho}\,\OPT_h(\sigma_i; s\oplus w_i)\\
		&\le \tilde{\rho}\,(\OPT_h(\sigma_i; s)+hw_i)
	\end{align*}
	%TODO save space by putting more equations in one line?
	and substituting this in the expression for the cost, we can bound the
	total cost by
	\begin{align*}
		\ALG_k(\sigma; s) = \sum_{i=0}^n \cost_i 
		&\le 2\sum_{i=0}^{n-1}  \left( \tilde{\rho} (\OPT_h(\sigma_i; s) +h w_i)-k w_i\right)
		+\tilde{\rho} (\OPT_h(\sigma_n; s) +h w_n) \\
		&= 2\sum_{i=0}^{n-1}  \left( \tilde{\rho} \OPT_h(\sigma_i;
		s) -(k-\tilde{\rho}h)w_i\right)
		+\tilde{\rho}\OPT_h(\sigma_n; s) +\tilde{\rho}h w_n\,.
	\end{align*}
	
	The parameters $w_i$ and $k$ were selected so that the summation
	telescopes, and we are left with
	\belowdisplayskip=0pt
	\begin{align*}
		\ALG_k(\sigma; s) & \leq  2\,\tilde{\rho}\, \OPT_h(\sigma_{n-1}; s)+
		\tilde{\rho} \, \OPT_h(\sigma_{n}; s) + \tilde{\rho}\, \epsilon\,
		\OPT_h(\sigma_{n-1}; s)\\
		& \leq (3+\epsilon)\, \tilde{\rho}\, \OPT_h(\sigma; s)\,.
	\end{align*}
\end{proof}

The previous reduction requires the infinite server problem to be
competitive on \emph{every} metric space. The following variant only
requires the infinite server problem to be competitive on the line.
\begin{theorem}\label{thm:redWeakInfLine}
	If the infinite server problem on the line is $\rho$-competitive, then
	for every $h\in\mathbb{N}$ and $\epsilon>0$, the $(h,k)$-server 
	problem on
	the line is $(3+\epsilon)\rho$-competitive, when
	$k\geq 2\lceil (1+1/\epsilon)\rho h\rceil$.
\end{theorem}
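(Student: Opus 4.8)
The feature that separates the line from a general metric space is that the relocation $M_{s\oplus w}$ driving Theorem~\ref{thm:redWeakToInf} increases \emph{all} distances from the source by $w$, and this cannot be realised on the line: pushing the source away from one half of the line necessarily pulls it towards the other. My plan is to sidestep this by cutting the line at the source into its two rays and running the reduction of Theorem~\ref{thm:redWeakToInf} \emph{separately} on each ray, where the relocation is realisable. This is precisely what the factor $2$ in the hypothesis on $k$ buys: place the source at $0$ and split the $k\ge 2\lceil(1+1/\epsilon)\rho h\rceil$ online servers into two groups of $k/2\ge\lceil(1+1/\epsilon)\rho h\rceil\ge(1+1/\epsilon)\rho h$ servers, the first serving only requests at positive positions and the second only requests at negative positions (requests at $0$ being free). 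This decouples the instance into two independent infinite-server reductions, one on $[0,\infty)$ and one on $(-\infty,0]$, with request subsequences $\sigma^{+}$ and $\sigma^{-}$.

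Two preparatory observations make Theorem~\ref{thm:redWeakToInf} applicable on each ray. First, I would upgrade the hypothesis from $\rho$-competitive to \emph{strictly} $\rho$-competitive, since otherwise the additive constant accumulates over the unbounded number of phases; as the line is scale-invariant, this follows from the scaling argument already used in Proposition~\ref{prop:commonCR}. Second, the relocated ray $[0,\infty)_{0\oplus w}$ is isometric to the line configuration with source at $-w$ and requests in $(0,\infty)$, hence (translating by $w$) to the line infinite server problem with requests confined to $(w,\infty)$. Therefore a strictly $\rho$-competitive line algorithm applies verbatim to every relocated ray, and the phase construction and telescoping estimate of Theorem~\ref{thm:redWeakToInf} carry over unchanged on each ray with $k/2$ in place of $k$ and $\tilde{\rho}=\rho$. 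This gives, for each ray,
\[
\text{(online cost on that ray)}\ \le\ (3+\epsilon)\rho\,\OPT_h^{\pm}(\sigma^{\pm}),
\]
where $\OPT_h^{\pm}$ denotes the optimal $h$-server offline cost on the corresponding ray.

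The delicate step, and the one I expect to be the main obstacle, is recombining the two ray bounds without losing a second factor of $2$: bounding each $\OPT_h^{\pm}$ separately by the full-line optimum $\OPT_h(\sigma)$ would only yield $2(3+\epsilon)\rho$. Instead I would prove $\OPT_h^{+}(\sigma^{+})+\OPT_h^{-}(\sigma^{-})\le\OPT_h(\sigma)$ by projecting an optimal $h$-server line solution onto each ray through the $1$-Lipschitz maps $x\mapsto\max(x,0)$ and $x\mapsto\max(-x,0)$. These projections are feasible for $\sigma^{+}$ and $\sigma^{-}$ respectively, and for each server the identity ``right-ray travel $+$ left-ray travel $=$ total travel'' (the total variation of a trajectory splits additively across the zero crossing) makes the two projected costs sum to exactly $\OPT_h(\sigma)$. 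Combining the per-ray inequalities with this identity gives total online cost $\le(3+\epsilon)\rho\bigl(\OPT_h^{+}(\sigma^{+})+\OPT_h^{-}(\sigma^{-})\bigr)\le(3+\epsilon)\rho\,\OPT_h(\sigma)$, as claimed.
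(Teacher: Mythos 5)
Your proposal follows essentially the same route as the paper's proof: upgrade to strict competitiveness by scaling, run the reduction of Theorem~\ref{thm:redWeakToInf} separately on each half-line (using that the relocated ray $[0,\infty)_{0\oplus w}$ is isometric to the line subspace $\{-w\}\cup(0,\infty)$), and double the servers so that half serve each ray. Your explicit recombination step, showing $\OPT_h^{+}(\sigma^{+})+\OPT_h^{-}(\sigma^{-})\le\OPT_h(\sigma)$ via the $1$-Lipschitz projections $x\mapsto\max(x,0)$ and $x\mapsto\max(-x,0)$, is correct and usefully fills in a detail the paper leaves implicit.
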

\begin{proof}
	A straightforward adaptation of the proof of the previous lemma, shows
	the existence of a $(3+\epsilon)\rho$-competitive algorithm for the
	interval $[0,\infty)$, when $k\geq 2(1+1/\epsilon)\rho h$. By doubling
	the number of online servers so that half of them are used in each
	half-line, we get a $(3+\epsilon)\rho$-competitive algorithm for the
	entire line, when $k\geq 2\lceil (1+1/\epsilon)\rho h\rceil$.
	
	Note that the proof assumes strictly competitive algorithms.  But, by
	a straightforward scaling argument, if the infinite server problem on
	the line is $\rho$-competitive, then it is also strictly
	$\rho$-competitive. This in turn implies a strictly $\rho$-competitive
	online algorithm for $M_{0\oplus w}$, since this space is isometric to
	the subspace $\{-w\}\cup(0,\infty)$ of the line. 
\end{proof}

In the next section we look at some particular metric spaces and give upper and lower bounds on the competitive ratio.

\section{Upper and Lower Bounds}\label{sec:Lbs}

Unlike the $k$-server problem, which is $1$-competitive if and only if the metric spaces has at most $k$ points and conjectured $k$-competitive otherwise, the situation is more diverse for the infinite server problem. For example, on uniform metric spaces (where all distances are the same) the problem is trivially $1$-competitive even if the metric space consists of uncountably many points. This is because an optimal strategy in this case is to spawn a server to every requested point. More generally, this strategy achieves a finite competitive ratio on any metric space where distances are bounded from below and above by positive constants. This suggests that statements about the competitive ratio for the infinite server problem cannot be as simple as the (conjectured) dichotomy for the $k$-server problem, which depends only on the number of points of the metric space. In this section we derive bounds on the competitive ratio for particular classes of metric spaces.

\subsection{Weighted Trees}
We consider the infinite server problem on metric spaces that can be modeled by edge-weighted trees. The points of the metric space are the nodes of the tree, and the distance between two nodes is the sum of edge weights along their connecting path. We choose the source of the metric space as the root of the tree, and define the depth of the tree as the maximal number of edges from the root to a leaf. The number of nodes can be infinite (otherwise the infinite server problem is trivially $1$-competitive), but we assume the depth to be finite.

An upper bound on the competitive ratio of such trees follows easily from an upper bound for the $(h,k)$-server on such trees \cite{BansalEJK17} and the equivalence theorem:

\begin{theorem}
	The competitive ratio of the infinite server problem on trees of depth $d$ is at most $O(2^d\cdot d)$.
	%TODO compute constant instead of O-notation
\end{theorem}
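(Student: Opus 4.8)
The plan is to derive this bound directly from the equivalence theorem (Theorem~\ref{thm:equiv}) together with the $(h,k)$-server upper bound for weighted trees due to Bansal et al.~\cite{BansalEJK17}. The logical chain is: a constant-competitive $(h,k)$-algorithm on depth-$d$ trees, valid as $k/h\to\infty$, feeds into Theorem~\ref{thm:reductionToWeak} (the ``\ref{it:equivWeakComp}$\implies$\ref{it:equivInfComp}'' direction), which converts it into a competitive algorithm for the infinite server problem with the \emph{same} competitive ratio. So the whole task reduces to extracting the correct numerical dependence on $d$ from the cited tree result and checking that Theorem~\ref{thm:reductionToWeak} applies.

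First I would state precisely what \cite{BansalEJK17} gives: for weighted trees of depth $d$, there is an algorithm for the $(h,k)$-server problem whose competitive ratio, in the limit $k/h\to\infty$, is bounded by some function $f(d)$ depending only on the depth. I expect $f(d)=O(2^d\cdot d)$ to be exactly the bound proved there (or to follow by a routine reading of their analysis), so the $2^d\cdot d$ factor originates entirely from their recursion over tree depth, not from our reduction. Next I would invoke Proposition~\ref{prop:commonCR-hk} to replace their $O(1)$-competitive (as $k/h\to\infty$) guarantee by a \emph{strictly} $f(d)$-competitive algorithm for some concrete pair $(h,k)$ with $k=k(h)$; strictness is what Theorem~\ref{thm:reductionToWeak} requires as a hypothesis. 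Then I would apply Theorem~\ref{thm:reductionToWeak} to this strictly $f(d)$-competitive family, with the metric space $M$ being the depth-$d$ weighted tree rooted at the source $s$, to obtain a strictly $f(d)$-competitive online strategy for the infinite server problem on that tree.

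The one subtlety to check is that the hypotheses of Theorem~\ref{thm:reductionToWeak} are genuinely met: it demands that for \emph{every} $h$ there exists $k=k(h)$ making the $(h,k)$-server problem strictly $\rho$-competitive on the \emph{fixed} tree, with all servers starting at the source. The tree result supplies competitiveness as $k/h\to\infty$, so for each $h$ I must pick $k(h)$ large enough that the depth-$d$ bound $f(d)$ is (essentially) achieved; since $f(d)$ is independent of $h$, this is unproblematic, and the resulting $\rho=O(2^d\cdot d)$ is uniform over $h$. The reduction then yields the infinite-server algorithm with ratio $O(2^d\cdot d)$, which is the claim.

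The main obstacle I anticipate is purely bookkeeping rather than conceptual: confirming that the competitive ratio in \cite{BansalEJK17} is indeed $O(2^d\cdot d)$ and not some other polynomial-times-exponential expression, and verifying that their bound holds in the required $k/h\to\infty$ regime with the source-rooted initial configuration. Everything else is a mechanical composition of results already established in this paper, so the proof should be short once the external bound is pinned down.
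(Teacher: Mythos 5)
Your proposal follows the paper's proof in its essentials: quote the $O(2^d\cdot d)$ bound of Bansal et al.\ for the $(h,k)$-server problem on depth-$d$ trees, upgrade it to \emph{strict} competitiveness, and feed it into Theorem~\ref{thm:reductionToWeak}. The one place where you diverge, and where your route is shakier than the paper's, is the strictness step. You propose to get it from Proposition~\ref{prop:commonCR-hk}, but that proposition is stated under the hypothesis that the $(h,k)$-server problem is $O(1)$-competitive on \emph{every} metric space, which is exactly the open question; here you only have the bound on the subclass of depth-$d$ trees. Moreover, even if you rerun the proposition's gluing-and-scaling argument restricted to that subclass (which is legitimate, since depth-$d$ weighted trees are closed under wedging at the root and under scaling), its conclusion only asserts the existence of \emph{some} universal strict ratio $\rho$; it does not by itself certify that this $\rho$ is still $O(2^d\cdot d)$, so you would have to track the constant through that argument as well. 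The paper avoids all of this by simply inspecting the proof in \cite{BansalEJK17} and observing that when all servers start at the root the additive constant is already zero, i.e.\ the algorithm is strictly $O(2^d\cdot d)$-competitive as given. So: same skeleton, but replace your appeal to Proposition~\ref{prop:commonCR-hk} either by that direct inspection or by a depth-$d$-restricted, constant-preserving version of the proposition's argument.
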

\begin{proof}
	Bansal et al.~\cite[Theorem 1.3]{BansalEJK17} showed that the competitive ratio of
	the $(h,k)$-server problem on trees of depth $d$ is at most
	$O(2^d\cdot d)$ provided that $k/h$ is large enough. Inspection of the proof in
	\cite{BansalEJK17} shows that if all servers start at the root, it is in fact strictly $O(2^d\cdot d)$-competitive. Thus, Theorem~\ref{thm:reductionToWeak} implies the result for the infinite server problem.
\end{proof}

%An important special case of weighted trees are weighted stars, \ie trees of depth $1$ where the root is not part of the metric space. Rooting a weighted star at the source leaf, it becomes a tree of depth $2$.

\subsection{Non-Discrete Spaces and Spaces with Small Infinite Subspaces}

The following theorem gives a lower bound of $3.146$ on the competitive ratio of the infinite server problem on any metric space containing an infinite subspace of a diameter that is small compared to the subspace's distance from the source. For example, every non-discrete metric space has this property (unless the source is the only non-discrete point), since non-discrete metric spaces contain infinite subspaces of arbitrarily small diameter. The theorem is a generalization of such a lower bound established in~\cite{CsirikINSW01} for a variant of the paging problem where cache cells can be bought. Crucial parts of the subsequent proof are as in~\cite{CsirikINSW01}.

\begin{theorem}\label{thm:3.146Lb}
	Let $M$ be a metric space containing an infinite subspace $M_0\subset M$ of finite diameter $\delta$ and a point $s\in M\setminus M_0$ such that the infimum $\Delta$ of distances between $s$ and points in $M_0$ is positive. Let $\lambda> 3.146$ be the largest real solution to
	\begin{align}
		\lambda=2+\ln \lambda\,.\label{eq:def3.146}
	\end{align}
	The competitive ratio of any deterministic online algorithm for the infinite server problem on $(M,s)$ is bounded from below by a value that converges to $\lambda$ as $\Delta/\delta\to\infty$. In particular, the competitive ratio is at least $\lambda$ if $M\setminus\{s\}$ contains a non-discrete part.
\end{theorem}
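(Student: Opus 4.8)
The plan is to reduce the infinite server problem on $\{s\}\cup M_0$ to the \emph{paging‑with‑purchasable‑cache} problem analyzed in \cite{CsirikINSW01}, run a parametrized version of their adversary, and let $\Delta/\delta\to\infty$ to recover the constant $\lambda$. The correspondence is as follows. Fix a countable set of distinct points $p_1,p_2,\dots\in M_0$, which exists because $M_0$ is infinite, and identify the set of servers currently residing in $M_0$ with a ``cache''. \emph{Spawning} a server from $s$ costs between $\Delta$ and $\Delta+\delta$ (hence $\approx\Delta$ in the limit) and plays the role of \emph{buying} a cache cell, whereas moving a server already in $M_0$ from one point to another costs at most $\delta$ and plays the role of serving a page fault by eviction. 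The price ratio fault:buy is thus at most $\delta:\Delta$, which tends to $0$ exactly as $\Delta/\delta\to\infty$. This is the ``expensive buy'' regime in which one rents many times before buying, and it is the regime in which the $\lambda$ bound of \cite{CsirikINSW01} applies; the infinite server instance is sandwiched between purchasable‑cache instances whose price ratio tends to $0$, so the sandwich tightens to $\lambda$ in the limit.

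First I would specify the adversary. At each step it requests a point of $M_0$ that is currently \emph{unoccupied} by the online algorithm; such a point always exists, since at any finite time the online algorithm has only finitely many servers while $M_0$ is infinite. Organizing these requests into phases as in \cite{CsirikINSW01}, each phase confronts the algorithm with a \emph{rent‑or‑buy} dilemma: keep paying at most $\delta$ per request to shuffle existing servers among the requested points (renting), or pay $\approx\Delta$ to spawn one more server that then covers an additional point for free (buying). Being online, the algorithm cannot know how many servers the eventual offline optimum will commit to, and the adversary exploits this by continuing each phase just long enough to make either choice costly relative to the best hindsight decision.

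The quantitative core is to lower‑bound the online cost and upper‑bound $\OPT$. For the upper bound I would let $\OPT_\infty$ fix, with foreknowledge, the optimal number $k$ of servers, spawn them once (cost $\approx k\Delta$), and serve the whole sequence paying only a bounded movement cost within $M_0$; the adversary is arranged so that $k$ servers keep $\OPT$'s fault cost low. For the lower bound I would track the online algorithm level by level according to how many servers it currently maintains: the factor $2$ in $\lambda=2+\ln\lambda$ arises from the ski‑rental nature of the decision at a single level (ratio $2-\delta/\Delta\to 2$), while the additive $\ln\lambda$ arises from summing the unavoidable overhead across the hierarchy of server counts $1,2,\dots,k$, which contributes a harmonic/logarithmic term. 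Choosing the phase lengths as functions of $\delta/\Delta$ yields, for each finite $\Delta/\delta$, an explicit lower bound $\lambda(\Delta/\delta)$ that increases to the largest root $\lambda$ of \eqref{eq:def3.146} as $\delta/\Delta\to 0$, which is the claimed convergence.

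The step I expect to be the main obstacle is the \emph{slack from $\delta>0$}: a spawn really costs up to $\Delta+\delta$ and an eviction up to $\delta$ (and possibly strictly less, depending on the geometry of $M_0$), so the clean paging accounting holds only up to multiplicative $(1\pm O(\delta/\Delta))$ errors, and one must check that these are absorbed into the convergence $\lambda(\Delta/\delta)\to\lambda$ rather than degrading the constant. One must also make the lower bound hold against \emph{arbitrary} deterministic algorithms, not merely lazy, local, or cache‑like ones, by arguing that any deviation (redundant spawns, non‑local moves) only increases the online cost, so restricting to lazy, local algorithms is without loss. Finally, the ``in particular'' clause is immediate: a non‑discrete point $x_0\neq s$ has a fixed distance $\Delta=d(s,x_0)>0$ yet admits infinite subspaces $M_0$ of arbitrarily small diameter $\delta$ around $x_0$, so $\Delta/\delta\to\infty$ is available and the competitive ratio is at least $\lambda$.
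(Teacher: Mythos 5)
Your high-level plan (request the lowest-indexed point of $M_0$ not covered by the online algorithm, view spawning as ``buying'' and intra-$M_0$ moves as ``faulting'', let $\Delta/\delta\to\infty$) matches the paper's, and your treatment of laziness/locality and of the ``in particular'' clause is fine. But the quantitative core as you describe it has a genuine gap, and it is precisely the part that produces the constant $\lambda$. You propose that $\OPT$ ``fix the optimal number $k$ of servers, spawn them once (cost $\approx k\Delta$), and serve the whole sequence paying only a bounded movement cost.'' This cannot work: if $\OPT$ uses as many servers as the online algorithm eventually spawns, its spawning cost already equals the online spawning cost $k\Delta$, and the resulting ratio is $1+f_k/(k\Delta)$ where $f_k$ is the online local-move cost --- which an algorithm that spawns eagerly makes zero. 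To beat ratio $1$ at all, the offline algorithm must use strictly fewer servers, $h=\lceil k/\lambda\rceil$ in the paper, and then it \emph{cannot} have bounded movement cost inside $M_0$: it must chase the requests. The entire difficulty is to upper-bound this offline chasing cost in terms of the online local cost $f_k$. The paper does this with a forking/averaging argument: it maintains a family of offline algorithms occupying all $\binom{j}{h-1}$ configurations equally often, shows that each online local move is mirrored (in reverse, hence at identical cost) by only a $\frac{j-h+1}{j}$ fraction of them, and then an Abel summation together with $\sum_{j}\frac{1}{j-1}\approx\ln(k/h)=\ln\lambda$ and the defining equation $\lambda=2+\ln\lambda$ makes the $\liminf f_k/k$ terms cancel. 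Your narrative that ``the factor $2$ is ski-rental and $\ln\lambda$ is a harmonic sum over server counts'' gestures at this but is not backed by any accounting that could replace it; without the configuration-averaging step there is no route from your setup to the stated constant.

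A second, smaller issue is the proposed black-box reduction to the purchasable-cache model of Csirik et al.\ with ``fault cost $\approx\delta$ up to $(1\pm O(\delta/\Delta))$ errors.'' Distances within $M_0$ are only bounded \emph{above} by $\delta$; they can be arbitrarily small relative to $\delta$, so a uniform fault cost is not available and the error is not a small multiplicative slack. The paper sidesteps this because each offline local move retraces an online local move between the same pair of points, so the two costs agree exactly whatever the geometry of $M_0$. If you want to keep the reduction flavour, you would need to build this move-by-move matching into it rather than appeal to the paging result as a black box.
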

\begin{proof}
	By scaling the metric, we can assume that $\delta=1$. Let $p_1,p_2,p_3,\dots$ be infinitely many distinct points in $M_0$.
	
	Fix some lazy deterministic online algorithm $\ALG$. We consider the request sequence that always requests the point $p_i$ with $i$ minimal such that $p_i$ is not occupied by a server of $\ALG$. We call a move of a server between two points in $M_0$ \emph{local} (\ie every move that does not spawn is local). Let $f_j$ be the cumulative cost of local moves incurred to $\ALG$ until it spawns its $j$th server. Let $\sigma_k$ be this request sequence that is stopped right after $\ALG$ spawns its $k$th server, for some large $k$. The total online cost is
	\begin{align}
		\ALG(\sigma_k)\ge k\Delta + f_k\,.\label{eq:3.146cost}
	\end{align}
	
	Let $h=\lceil k/\lambda\rceil$. We consider several offline algorithms 
	that start behaving the same way, so we think of it as one algorithm 
	initially that is forked into several algorithms later. The offline 
	algorithms make use of only $h$ servers and they begin by spawning 
	them to the points $p_1,\dots,p_h$. They do not need to move any 
	servers until $\ALG$ spawns its $h$th server. Whenever $\ALG$ spawns its 
	$j$th server for some $j\ge h$, every offline algorithm is forked to $h$ 
	distinct algorithms: Each of them moves a different server to $p_{j+1}$ 
	(to prepare for the next request, which will be at $p_{j+1}$). We will keep 
	the invariant that each offline algorithm already has a server at the next 
	request. To this end, whenever $\ALG$ does a local move from $p$ to $p'$, 
	every offline algorithm that does not have a server at $p$ moves a server 
	from $p'$ to $p$; note that the algorithm had a server at $p'$ by the 
	invariant, and the next request will be at $p$.
	
	When $\ALG$ has $j$ spawned servers ($j\ge h$), the offline algorithms are 
	in ${{j} \choose {h-1}}$ different configurations, each of which occurs 
	equally often among them. If $\ALG$ does a local move from $p$ to $p'$, 
	there are ${{j-1}\choose {h-1}}$ different offline configurations for which 
	a local move is made in the opposite direction. Thus, for each local move 
	by $\ALG$ while having $j$ servers in total, a portion ${j-1\choose h-1}/{j\choose h-1} = \frac{j-h+1}{j}$
	of the offline algorithms move a server in the opposite direction for the same cost.
	
	We use the average cost of all offline algorithms we considered as an upper bound on the optimal cost. The cost of spawning $h$ servers is at most $h(\Delta+1)$, and the average cost while $\ALG$ has $j$ spawned servers (for $j=h,\dots,k-1$) is at most $\frac{j-h+1}{j}(f_{j+1}-f_j)+1$ (with the ``$+1$'' coming from the move when offline algorithms fork). Hence,
	\begin{align}
		\OPT(\sigma_k) &\le h(\Delta+1) + k-h + \sum_{j=h}^{k-1}\frac{j-h+1}{j}(f_{j+1}-f_j)\,,\nonumber\\
		&\le h\Delta + k + \frac{k-h}{k-1}f_k - \frac{f_h}{h} - \sum_{j=h+1}^{k-1}\frac{h-1}{j(j-1)}f_{j}\,,\nonumber
	\end{align}
	Note that $\frac{f_k}{k}$ is bounded from above because otherwise $\ALG$ would not be competitive, and it is bounded from below by $0$. Thus, $L=\liminf_{k\to\infty}\frac{f_k}{k}$ exists. In the following we use the asymptotic notation $o(1)$ for terms that disappear as $k\to\infty$. We can choose arbitrarily large values of $k$ such that $\frac{f_k}{k}= L+o(1)$. Since $h=\lceil k/\lambda\rceil$, we have $\frac{f_j}{j}\ge L+o(1)$ for all $j\ge h$. Moreover, $\sum_{j=h+1}^{k-1}\frac{1}{j-1}=\ln(\lambda)+o(1)$. This allows us to simplify the previous bound to
	\begin{align*}
		\OPT(\sigma_k) &\le \frac{k}{\lambda}\Bigl(\Delta + \lambda + \bigl(\lambda-1 - \ln(\lambda)\bigr)L + o(1)\Bigr)\\
		&= \frac{k}{\lambda}\bigl(\Delta + L + \lambda + o(1)\bigr)\,,\nonumber
	\end{align*}
	where the last step uses equation \eqref{eq:def3.146}.
	
	The competitive ratio is at least
	\begin{align*}
		\frac{\ALG(\sigma_k)+O(1)}{\OPT(\sigma_k)} &\ge \frac{k\Delta+f_k+O(1)}{\frac{k}{\lambda}\bigl(\Delta + L + \lambda + o(1)\bigr)}\\
		&= \lambda\cdot\frac{\Delta+L}{\Delta + L + \lambda} + o(1)\,.
	\end{align*}
	The fraction in the last term tends to $1$ as $\Delta\to\infty$.
\end{proof}

This bound is tight due to a matching upper bound in \cite{CsirikINSW01} that shows (translated to the terminology of the infinite server problem) that a competitive ratio of $\lambda$ can be achieved on metric spaces where all pairwise distances are $1$ except that the source is at some larger distance $\Delta$ from the other points.

The previous theorem together with the equivalence theorem also allows us to obtain a new lower bound for the $k$-server problem against weak adversaries.

\begin{corollary}
	For sufficiently large $h$, there is no $3.146$-competitive algorithm for the $(h,k)$-server problem on the line, even if $k\to\infty$.
\end{corollary}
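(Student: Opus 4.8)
My plan is to derive the $(h,k)$ lower bound by contraposing the reduction of Theorem~\ref{thm:reductionToWeak} against the infinite-server lower bound of Theorem~\ref{thm:3.146Lb}, and then to promote the resulting single value of $h$ to all sufficiently large $h$ through an elementary monotonicity in the number of offline servers.

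First I would instantiate Theorem~\ref{thm:3.146Lb} on the real line. Placing the source at $s=0$ and taking $M_0=[\Delta,\Delta+1]$ --- a non-discrete, hence infinite, subspace of diameter $\delta=1$ whose distance from $s$ is $\Delta$ --- the theorem yields a lower bound converging to $\lambda$ as $\Delta\to\infty$. Since the line contains such an interval for every $\Delta$, the competitive ratio of the infinite server problem on the line is at least $\lambda>3.146$. Crucially, this bound already incorporates the additive $O(1)$ term, which is absorbed because $\OPT(\sigma_k)\to\infty$; thus the infinite server problem on the line is not $3.146$-competitive, and \emph{a fortiori} not strictly $3.146$-competitive.

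Next I would apply the contrapositive of Theorem~\ref{thm:reductionToWeak} with $M$ the line and $\rho=3.146$. Since strict $3.146$-competitiveness of the infinite server problem on the line has just been ruled out, it cannot be that for \emph{every} $h$ there is some $k=k(h)$ admitting a strictly $3.146$-competitive $(h,k)$-server algorithm on the line. Hence there is an $h_0$ for which no $k$ whatsoever yields such an algorithm; note that the quantifier over $k$ already covers the regime $k\to\infty$. To turn ``$\exists h_0$'' into ``all large $h$'', I would observe that $\OPT_h(\sigma)\ge\OPT_{h+1}(\sigma)$, so any $k$-server algorithm that is $\rho$-competitive against $h+1$ offline servers is, using the identical moves, $\rho$-competitive against $h$ offline servers. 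Therefore the existence of a good $k$ is preserved under decreasing $h$, its failure is preserved under increasing $h$, and since it fails at $h_0$ it fails for every $h\ge h_0$. Finally, to replace ``strictly $3.146$-competitive'' by ``$3.146$-competitive'', I would invoke the scaling argument used in the proof of Theorem~\ref{thm:redWeakInfLine}: on the line with all servers starting together, scaling an instance toward the source drives any fixed additive constant to zero, so the two notions coincide and the failure is a genuine lower bound even with additive constants allowed.

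The only delicate point is the leap from the single witness $h_0$ produced by contraposition to a uniform statement over all large $h$ while simultaneously ruling out every $k$; the monotonicity observation dispatches both at once, since it makes ``no competitive algorithm exists'' upward-closed in $h$ and, by the quantification over $k$, indifferent to the size of $k$. The choice of interval in Theorem~\ref{thm:3.146Lb} and the strict-versus-nonstrict scaling are routine, so I expect no further obstacles.
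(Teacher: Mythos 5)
Your proposal is correct and follows essentially the same route as the paper, whose proof is just a terse pointer to Theorems~\ref{thm:reductionToWeak} and \ref{thm:3.146Lb} plus the scaling remark; you merely spell out the steps it leaves implicit (instantiating the lower bound on the line, contraposing the reduction, the $\OPT_h\ge\OPT_{h+1}$ monotonicity that makes failure upward-closed in $h$, and the strict-versus-nonstrict scaling). No gaps.
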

\begin{proof}
	By a scaling argument it is easy to see that if the infinite server problem on the line is $\rho$-competitive, then it is also \emph{strictly} $\rho$-competitive. Thus, the statement follows from Theorems~\ref{thm:reductionToWeak} and \ref{thm:3.146Lb}.
\end{proof}

This improves upon both the previous best known lower bounds of $2$ for this problem on the line \cite[p.\,175]{BorodinE98} and $2.4$ on general metric spaces \cite{BansalEJK17}.

\subsection{Layered Graphs}\label{sec:layered}

A \emph{layered graph of depth $D$} is a graph whose (potentially infinitely many) nodes can be arranged in layers $0,1,\dots,D$ so that all edges run between adjacent layers and each node -- except for a single node in layer $0$ -- is connected to at least one node of the previous layer. The induced metric space is the set of nodes with the distance being the minimal number of edges of a connecting path. For the purposes of the infinite server problem, the single node in layer $0$ is the source. We assume $D\ge 2$ to avoid trivial cases.

Note that a connected graph is layered if and only if it is bipartite. Moreover, any graph can be embedded into a bipartite graph by adding a new node in the middle of each edge. So essentially, layered graphs capture \emph{all} graph metrics.

Let \emph{Move Only Outwards (MOO)} be some lazy and local algorithm 
for the infinite server problem on layered graphs that moves servers along edges only in the direction away from the source. Not surprisingly, the competitive ratio 
of this simple algorithm is quite bad and we show that it is exactly 
$D-1/2$. Nonetheless, at least for $D\le3$ this is actually the optimal 
competitive ratio.

\begin{theorem}\label{thm:MOO}
	The competitive ratio of MOO is exactly $D-\frac{1}{2}$.
\end{theorem}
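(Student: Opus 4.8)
The plan is to prove the two inequalities separately: that MOO is $(D-\tfrac12)$-competitive (upper bound) and that no smaller ratio holds for it (lower bound). The starting point for both is a clean description of MOO's cost coming from the outward-only restriction: every server that leaves the source traces a monotone path, so its total travel equals its final layer, and hence $\mathrm{MOO}(\sigma)=\sum_{\ell=1}^{D}U_\ell$, where $U_\ell$ is the number of MOO servers that ever cross the boundary between layers $\ell-1$ and $\ell$; since crossings are one-directional, $U_\ell$ is also the final number of MOO servers at layer $\ge\ell$. I will also record the crude bound $\mathrm{MOO}(\sigma)\le D\cdot N$, where $N$ is the number of spawns, via the potential $\Phi=\sum_v (D-\mathrm{layer}(v))$ summed over MOO-occupied nodes $v$: each reuse move pays for itself against $\Phi$, each spawn injects at most $D$ into $\Phi$, and $\Phi\ge 0$. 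The whole point of the theorem is to sharpen the trivial factor $D$ to $D-\tfrac12$.

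For the upper bound I would run an amortized argument, processing each request after letting \OPT move first and maintaining a potential measuring the layer-weighted discrepancy between MOO's and \OPT's configurations, aiming for the step inequality $\text{(MOO move cost)}+\Delta\Phi\le(D-\tfrac12)\,\text{(\OPT move cost)}$. The structural facts I would exploit are that a reuse move costs only $1$ and is absorbed by $\Phi$, whereas a layer-$\ell$ request that MOO serves by a spawn costs $\ell\le D$ but occurs precisely when MOO has no occupied ancestor of the request, a situation \OPT can only have produced by itself paying to reach layer $\ell$. The nontrivial part is extracting the $\tfrac12$ saving; I expect it to come from a boundary effect at the outermost layer $D$ (a server already at the deepest layer can no longer be forced to waste further outward travel), so that the per-boundary charges sum to $D-\tfrac12$ rather than $D$.

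For the lower bound I would construct an infinite layered graph of depth $D$ together with an adaptive request sequence that repeatedly forces MOO to respawn from the source while \OPT serves everything by reusing a bounded pool of servers through inward and lateral moves that MOO is forbidden to make. Concretely, I want a gadget in which \OPT advances a server one edge per request (unit cost) along a walk that alternates between layers near $D$, while each request lands on a node none of whose parents MOO currently occupies, forcing MOO to pay the full layer. Averaging the layers of the forced spawns against \OPT's unit-cost steps should drive the ratio to $D-\tfrac12$.

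I expect the lower bound to be the main obstacle, for a structural reason. Whenever \OPT covers a deep request cheaply by a unit move from a parent, that same parent tends to be occupied by MOO -- it just served the adjacent previous request -- which lets MOO \emph{reuse} it at cost $1$ and collapses the intended gap; naive constructions of this kind only reach ratio $D/2$. Pushing the bound all the way to $D-\tfrac12$ requires a graph in which \OPT's cheap coverage of the parents arises as a free by-product of its walk, so that MOO need not (and cannot) replicate it, and in which MOO is genuinely forced to spawn at \emph{every} step. Getting this construction exactly right, and matching it with the precise $\tfrac12$ in the amortized upper bound, is the crux; I would also need to verify that the argument is robust to MOO's tie-breaking, so that the bound holds for \emph{any} lazy, local, outward algorithm.
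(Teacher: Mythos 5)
Your cost identity for MOO ($\mathrm{MOO}(\sigma)=\sum_j j\,n_j$, with $n_j$ the final number of servers in layer $j$) is correct and is also the paper's starting point, but neither of your two inequalities is actually proved; both are left as plans whose hard steps you yourself flag as unresolved. For the upper bound you announce the step inequality $\Delta\cost+\Delta\Phi\le(D-\tfrac12)\,\Delta\OPT$ without exhibiting a potential that satisfies it, and you only conjecture where the $\tfrac12$ comes from. A per-request amortization faces a real obstacle here: there are requests on which $\OPT$ moves nothing while MOO spawns at cost up to $D$, so $\Phi$ would have to have pre-paid the entire spawn cost in advance; designing such a $\Phi$ \emph{is} the difficulty, and you have not done it. The paper avoids amortization entirely with a global count on the final configuration: every node MOO finally occupies was requested, hence visited by an offline server; charging $1$ per visited node on layers $1,\dots,D-2$ and $2$ per visited node on layer $D$ (the latter also paying for a possible layer-$(D-1)$ visit en route), plus $(n_{D-1}-n_D)^+$ for the leftover layer-$(D-1)$ nodes, gives $\OPT\ge\sum_{j\le D-2}n_j+2n_D+(n_{D-1}-n_D)^+$, and a termwise comparison of coefficients against $\sum_j j\,n_j$ yields exactly $D-\tfrac12$ (the binding ratio is $(2D-1)/2$ on the layer-$D$ term, which is indeed your ``boundary effect,'' but it has to be extracted by this accounting rather than guessed).

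For the lower bound you give no construction at all, and you explicitly identify the obstruction (the parent $\OPT$ uses for cheap coverage is also occupied by MOO, collapsing naive constructions to ratio $D/2$) as ``the crux'' without resolving it. The paper's resolution is to make MOO's cheap reuse of that parent the very event that triggers the expensive move: layers $0,\dots,D-2$ are single nodes, layers $D-1$ and $D$ are infinite, arranged so that each block of $k$ layer-$D$ nodes is adjacent to a block of $2k$ layer-$(D-1)$ nodes; the adversary requests a fresh group of $k$ layer-$(D-1)$ nodes (cost $k(D-1)$ for MOO), then alternately requests a layer-$D$ node and \emph{re-requests whichever layer-$(D-1)$ node MOO just vacated}, forcing a fresh spawn of cost $D-1$ for each of the $k$ alternations. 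MOO thus pays $k(2D-1)$ per round while $\OPT$, reusing a pool of $2k$ servers that drift between the two bottom layers, pays about $2k$. The adaptivity of ``re-request the vacated node'' is also exactly what makes the bound robust to MOO's tie-breaking, which you listed as an open verification. As it stands, your proposal establishes neither direction of the theorem.
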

\begin{proof}
	\ \\
	\textit{Upper bound:\\}
	Consider some final configuration of the algorithm. Let $n_j$ be the 
	number of servers in the $j$th layer. Then the cost of MOO is
	\begin{align*}
		\cost=\sum_{j=1}^D jn_j.
	\end{align*}
	%TODO use MOO(\sigma) instead of cost? Might get ugly later with \Delta\cost etc
	To obtain an upper bound on $\OPT$, observe that every node occupied 
	by MOO in the final configuration must have been visited by an offline 
	server at least once. We account an offline cost of $1$ for each visit of a 
	node on layers $1,\dots,D-2$ and an offline cost of $2$ for each visit of 
	a node on layer $D$. This cost of $2$ covers the last two edge-traversals 
	before visiting the layer-$D$-node, so this may include serving a 
	request on layer $D-1$. If $n_{D-1}>n_D$, then we can account another 
	$n_{D-1}-n_D$ cost for visiting the remaining at least $n_{D-1}-n_D$ 
	requested nodes on layer $D-1$. In summary,
	\begin{align*}
		\OPT \ge \sum_{j=1}^{D-2} n_j + 2n_D+(n_{D-1}-n_D)^+
	\end{align*}
	where $(n_{D-1}-n_D)^+:=\max\{0,n_{D-1}-n_D)$. The upper bound on 
	the competitive ratio follows since
	\begin{align*}
		\frac{\cost}{\OPT}
		&\le \frac{\sum_{j=1}^D jn_j}{\sum_{j=1}^{D-2} n_j + 
			2n_D+(n_{D-1}-n_D)^+}\\
		&\le \frac{(D-2)\sum_{j=1}^{D-2} n_j + (2D-1)n_D + 
			(D-1)(n_{D-1}-n_D)^+}{\sum_{j=1}^{D-2} n_j + 
			2n_D+(n_{D-1}-n_D)^+}\\
		&\le D-\frac{1}{2}\,.
	\end{align*}
	\ \\
	\textit{Lower bound:\\}
	Let $k,n\in\mathbb{N}$ be some large integers. We construct the 
	following graph: Layers $0,\dots,D-2$ consist of one node each and 
	layers $D-1$ and $D$ consist of infinitely many nodes each, denoted 
	$a_0,a_1,a_2,\dots$ and $b_0, b_1,b_2,\dots$ respectively. For each 
	$i\in\mathbb{N}_0$, the $k$ nodes $b_{ik},b_{ik+1}\dots,b_{(i+1)k-1}$ 
	are adjacent to each of the $2k$ nodes $a_{ik},a_{ik+1},a_{(i+2)k-1}$ and 
	to no other nodes. The set of remaining edges is uniquely determined by 
	the fact that this is a layered graph of depth $D$.
	%TODO picture
	
	The request sequence consists of $n$ rounds $0,1,\dots,n-1$, where 
	each request in round $i$ is at a node from the list 
	$a_{ik},a_{ik+1},\dots,a_{(i+1)k-1}, b_{ik},b_{ik+1},\dots,b_{(i+1)k-1}$. 
	Round $i$ starts with requests on the nodes 
	$a_{ik},a_{ik+1},\dots,a_{(i+1)k-1}$. Then, for $j=0,\dots,k-1$, the 
	adversary first requests $b_{ik+j}$ and then requests whichever node 
	from $a_{ik},a_{ik+1},\dots,a_{(i+1)k-1}$ has been left by an MOO-server 
	to serve the request at $b_{ik+j}$. Note that by definition of MOO and the 
	graph, the server it moves to $b_{ik+j}$ does indeed come from 
	$a_{ik},a_{ik+1},\dots,a_{(i+1)k-1}$.
	
	In round $i$, MOO first pays $k(D-1)$ to move $k$ servers to 
	$a_{ik},a_{ik+1},\dots,a_{(i+1)k-1}$ and then, for each $j=0,\dots,k-1$, 
	it pays $1$ to move to $b_{ik+j}$ and $D-1$ to spawn a new server at 
	the group $a_{ik},a_{ik+1},\dots,a_{(i+1)k-1}$. Over $n$ rounds this 
	makes a total cost of $n(k(D-1)+k(1+D-1))=nk(2D-1)$.
	
	The offline algorithm can serve requests as follows: The requests at 
	$a_{ik},\dots,a_{(i+1)k-1}$ at the beginning of round $i$ are served by 
	spawning if $i=0$ (for cost $(d-1)k$) and by sending servers from 
	$b_{(i-1)k},\dots,b_{ik-1}$ if $i\ge1$ (for cost $k$). The request at 
	$b_{ik}$ is served by spawning a server (cost $D$) and the requests at 
	$b_{ik+1},\dots,b_{ik+k-1}$ are served by sending a server from a node 
	in $a_{ik},\dots,a_{(i+1)k-1}$ that will not be requested any more (cost 
	$1$ each, so $k-1$ per round). Over $n$ rounds, this adds up to an 
	offline cost of $(D-1)k+(n-1)k+n(D+k-1)=2nk+(D-2)k+n(D-1)$. The 
	ratio of online and offline cost is
	\begin{align*}
		\frac{nk(2D-1)}{2nk+(D-2)k+n(D-1)}=\frac{2D-1}{2+\frac{D-2}{n}+\frac{D-1}{k}}\,,
	\end{align*}
	which gets arbitrarily close to $D-\frac{1}{2}$ for $n$ and $k$ large 
	enough.
\end{proof}

\begin{theorem}\label{thm:layeredLBs}
	The competitive ratio of the infinite server problem on layered graphs of depth $D$ is exactly $1.5$ for $D=2$, exactly $2.5$ for $D=3$ and at least $3$ for $D\ge 4$.
\end{theorem}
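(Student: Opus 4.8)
The plan is to prove each case by matching an upper and a lower bound. The upper bounds for $D=2$ and $D=3$ are immediate from Theorem~\ref{thm:MOO}: since MOO is $(D-\tfrac12)$-competitive, the competitive ratio of the infinite server problem is at most $\tfrac32$ when $D=2$ and at most $\tfrac52$ when $D=3$. It therefore remains to prove the three lower bounds $\tfrac32,\ \tfrac52,\ 3$, each asserting that \emph{no} deterministic online algorithm beats the stated value. For these I would exhibit, in each case, a specific layered graph together with an adaptive adversary, and show that every online algorithm incurs cost at least $\rho$ times the offline optimum.

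As a starting point I would reuse the hard instance behind the MOO lower bound in Theorem~\ref{thm:MOO}: a single-node path $s=v_0,v_1,\dots,v_{D-2}$ topped by complete-bipartite \emph{groups} on the bottom two layers $D-1$ and $D$, with consecutive groups overlapping in a shared block of layer-$(D-1)$ nodes, together with the adaptive pattern that first requests a group's layer-$(D-1)$ nodes and then repeatedly requests a fresh layer-$D$ node followed by the node the algorithm just vacated. For $D=2$ this already yields the bound $\tfrac32$ against \emph{every} algorithm: at depth two, re-covering a vacated layer-$1$ node by spawning costs $1$, which is no more than redeploying any server that was pushed outward (such a server sits on layer $2$ and reaching a layer-$1$ node costs at least $1$, and strictly more than $1$ unless it is the adjacent node). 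Hence the inward reuse that distinguishes a general algorithm from MOO buys nothing, the MOO cost accounting carries over verbatim, and the forced online-to-offline ratio is exactly $\tfrac32$.

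The heart of the proof is that for $D\ge 3$ a general algorithm \emph{can} profit from inward moves: after moving a layer-$(D-1)$ server outward to serve a layer-$D$ request for cost $1$, it may move the same server back for cost $1$ instead of respawning for cost $D-1$, and this oscillation already drives the forced ratio on the plain MOO instance below the target. The adversary must therefore be redesigned so that serving a bottom request \emph{strands} a server — the node that is re-requested must not be cheaply reachable from the point to which the server was pushed — forcing genuinely fresh coverage from the source, while the overlapping shared layer-$(D-1)$ blocks still let the offline amortize coverage across rounds. I would then bound the optimum from above by the \emph{average} cost of a family of offline strategies that fork whenever the online algorithm spawns, exactly as in the proofs of Theorem~\ref{thm:MOO} and Theorem~\ref{thm:3.146Lb}, and tune the group sizes so that the ratio converges to $\tfrac52$ for $D=3$ and to $3$ for $D=4$. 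The bound for all $D\ge 4$ would then follow either from the same construction or by embedding the depth-$4$ instance beneath a longer path to the source, which only increases the forced ratio.

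The main obstacle is precisely ruling out these inward-move and reuse strategies uniformly over all deterministic algorithms: the naive MOO instance no longer suffices once $D\ge 3$, and one must engineer the graph and the order of requests so that no redeployment of an outward-moved server is cheaper than fresh spawning. This is exactly the gap that separates the general lower bound ($3$) from MOO's ratio ($D-\tfrac12$) for $D\ge 4$, and it is why the burden of the proof lies in the construction rather than in the accounting.
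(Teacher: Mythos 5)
Your upper bounds and your $D=2$ argument match the paper's proof (the paper likewise observes that at depth $2$ an inward move from layer $2$ to layer $1$ costs no less than spawning, so every algorithm is WLOG of MOO type and Theorem~\ref{thm:MOO} applies). But for $D\ge 3$ your proposal is a plan rather than a proof, and the entire content of the theorem lies in the part you leave open. Moreover, the one concrete repair you propose -- engineering the graph ``so that no redeployment of an outward-moved server is cheaper than fresh spawning'' -- is not achievable in a layered graph (a layer-$(D-1)$ server is always at distance $1$ from an adjacent layer-$D$ node, while spawning costs $D$) and is not how the paper proceeds. The paper does not prevent reuse; it \emph{charges} for it: the adversary is adaptive and, whenever the (WLOG lazy and local) algorithm vacates a node to serve a request, it immediately re-requests the vacated node. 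Locality is what pins down \emph{which} node was vacated (any path to $b_j$ from older generations passes through $S$ and then $A^S$), and this is what makes each step of part a cost at least $2$ and each step of part b cost at least $3$.

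Two further ingredients are missing and are essential. First, the graph cannot be the linear chain of overlapping groups from the MOO lower bound: the paper builds a \emph{branching} family of node sets, with fresh sets $A^{S'},B^{S'}$ for \emph{every} $2k$-subset $S'\subseteq S\cup B^S$, precisely so that wherever the adversary's $2k$ servers end up after a round, unvisited nodes of the next generation exist and the attack can continue indefinitely. Second, the offline bound is \emph{not} obtained by averaging over a forked family of strategies as in Theorem~\ref{thm:3.146Lb}; it is a single adversary with $2k$ servers whose cost per round is at most $2k+2$. Finally, for $D\ge 4$ the per-round accounting alone gives only $5k$ (an algorithm can keep servers parked in layer $D-1$ and pay $2$ per step of part b by oscillating), so the paper needs a potential $\Phi$ equal to the number of online servers in layer $D-1$, yielding $\Delta\cost\ge 6k+\Delta\Phi$ per round and hence the ratio $3$. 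None of these steps is routine, and your proposal does not indicate how you would discover or replace them; as written, the lower bounds for $D\ge 3$ are not established.
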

\begin{proof}
	For $D=2$, the only possibility to move a server closer to the source is 
	from layer $2$ to layer $1$. But since spawning to layer $1$ is at least as 
	good, we can restrict our attention to algorithms of the type MOO. The 
	result follows from Theorem~\ref{thm:MOO}.
	
	For $D=3$, the upper bound follows from Theorem~\ref{thm:MOO}. It 
	remains to show the lower bounds for $D\ge 3$.
	
	Fix some large integers $k,n\in\mathbb{N}$. Consider the following 
	layered graph of depth $D$. For $i=0,\dots,D-1$ there exists a node 
	$v_i$ in layer $i$. The remaining nodes are defined inductively as all 
	nodes obtained by the following two rules:
	\begin{itemize}
		\item There exist a set $S_0$ of $2k$ nodes and sets $A^{S_0}$ and 
		$B^{S_0}$ of $k$ nodes.
		\item Let $S$ be a set of $2k$ nodes such that $A^S$ and $B^S$ 
		exist. Then for each $S'\subset S\cup B^S$ of size $2k$ there are sets 
		$A^{S'}$ and $B^{S'}$ of $k$ nodes.
	\end{itemize}
	The nodes in the sets $A^S$ are in layer $D-1$, the nodes in $S_0$ and 
	in the sets $B^S$ are in layer $D$. For a node in some set $A^S$, the set 
	of adjacent nodes in layer $D$ is $S\cup B^S$. The remaining edges are 
	so that this is a layered graph with the layers as specified.
	%TODO picture
	
	For purposes of the analysis below, we further define a 
	\emph{generation} of a node as follows: The nodes $v_0,\dots,v_{D-1}$ 
	and the nodes in $S_0$ have generation $1$. The generation of nodes in 
	$A^S$ and $B^S$ is the maximal generation of any node in $S$ plus $1$.
	
	Let $\ALG$ be some online algorithm. We assume without loss of 
	generality that $\ALG$ is lazy and local.
	
	The adversary chooses the following request sequence against $\ALG$. 
	First, request the nodes in $S_0$ until $\ALG$ has a server at each of 
	them. The adversary also moves $2k$ servers towards these nodes. The 
	adversary uses only these $2k$ servers for the entire sequence of 
	requests. The remainder of the requests consists of several rounds. We 
	will keep the invariant that at the beginning of the $i$th round, the $2k$ 
	adversary servers occupy a set $S$ for which $A^S$ and $B^S$ (with 
	nodes of generation $i+1$) exist, and the online servers occupy nodes of 
	generation at most $i$. Clearly this holds before the first round. Let 
	$A^{S}=\{a_1,\dots,a_k\}$ and $B^{S}=\{b_1,\dots,b_k\}$.
	
	The requests of the $i$th round are divided into part a and part b, 
	consisting of \emph{steps} a.$1$,\dots,a.$k$, b.$1$,\dots,b.$k$ that are 
	executed in this order. Step a.$j$ consists of the following one or two 
	requests: First request $a_j$. If $\ALG$ moves a server from some $b\in 
	S$ towards $a_j$, immediately request $b$. We can assume that online 
	servers cover $A^{S}$ after the end of part a (otherwise request nodes in 
	$A^{S}$ again at the end of part a until this is the case). Step b.$j$ 
	consists of the following two or three requests: First request $b_{j}$. 
	Note that any path from a node of generation at most $i$ to $b_j$ 
	contains a node in $S$, and from any node in $S$, the shortest paths to 
	$b_j$ include the ones along the nodes in $A_S$. Thus, since $\ALG$ is 
	local, it will move a server from some $a\in A^{S}$ towards $b_j$. The 
	second request of step b.$j$ is at this node $a$ and, if $\ALG$ moves a 
	server from some $b\in S\cup B^S$ towards $a$, then the step contains 
	a third request at $b$.
	
	The adversary cost per round is at most $2k+2$: For each $j=1,\dots,k$, 
	there are at least $j$ nodes in $S$ that will \emph{not} be requested 
	during steps a.$j$, \dots, a.$k$, b.$1$, \dots, b.$(k-1)$. Hence, the 
	adversary can serve all requests of part a for cost $k$ by moving $k$ 
	servers from $S$ towards $A^{S}$ whilst keeping servers at all nodes of 
	$S$ that will be requested during the steps b.$1$,\dots,b.$(k-1)$. 
	Similarly, it can serve the steps b.$1$,\dots,b.$(k-1)$ for cost $k-1$ by 
	moving $k-1$ servers from $A^{S}$ to $B^S$. The final step b.$k$ of the 
	round can be served at cost $3$ using the last server in $A^{S}$ to serve 
	the requests and finish with all $2k$ offline servers in some set 
	$S'\subseteq S\cup B^S$.
	
	We analyze the online cost for the cases $D=3$ and $D\ge 4$ separately.
	
	If $D=3$, then the cost for each step a.$j$ is at least $2$ and the cost 
	for each step b.$j$ is at least $3$. Thus, the cost per round is at least 
	$5k$. As $k$ goes to infinity, the ratio of online and offline cost in each 
	round converges to $2.5$. As the number of rounds goes to infinity, the 
	online and offline costs before the first round become negligible, which 
	proves the lower bound of $2.5$ for $D=3$.
	
	For $D\ge 4$, we use a potential $\Phi$ equal to the number of online 
	servers in layer $D-1$. During step a.$j$, either $\Phi$ does not change 
	and the cost is at least $2$, or $\Phi$ increases by $1$ and the cost is at 
	least $3$. Thus, during step a.$j$ we have $\Delta\cost \ge 
	2+\Delta\Phi$ and hence during part a we have $\Delta\cost \ge 
	2k+\Delta\Phi$. During step b.$j$, either $\Phi$ decreases by $1$ and 
	the cost is at least $3$, or $\Phi$ does not change and the cost is at 
	least $4$. Thus, during part b we have $\Delta\cost \ge 4k+\Delta\Phi$. 
	In total, this adds up to $\Delta\cost \ge 6k+\Delta\Phi$ during the 
	round. Over $n$ rounds, this makes $\Delta\cost \ge 
	6nk+\Delta\Phi\ge 6nk$ since $\Phi$ starts at $0$ before the first round 
	and remains nonnegative. As $k$ and $n$ go to infinity, the ratio of our 
	bounds on online and offline cost converges to $3$.
\end{proof}

It remains an open problem to close the gap between the lower bound of $3$ and the upper bound of $3.5$ for $D=4$. More importantly, we are interested in the question whether an algorithm better than MOO exists for large $D$, achieving a competitive ratio of less than $D-1/2$ on any layered graph of depth $D$. Note that if no algorithm with a competitive ratio of $O(1)$ as $D\to\infty$ exists, then the infinite server problem on general metric spaces would not be competitive.

For large $D$, the lower bound of $3$ is certainly not tight: Consider a layered graph where each layer contains one node except that the bottom layer contains infinitely many nodes. By Theorem~\ref{thm:3.146Lb} (and a matching upper bound shown in \cite{CsirikINSW01}), the competitive ratio on this graph converges to $\lambda\approx3.146$ as $D\to\infty$.

\section{Algorithms with Unbounded Competitive Ratio}\label{sec:failed}

We examine the performance of classical algorithms known for the $k$-server problem when applied to the infinite server problem. The main focus of this section is a generalization of the Double Coverage algorithm for the line with adjusted server speeds. This idea has proved successful for the $(h,k)$-server problem (and hence the infinite server problem) on weighted trees \cite{BansalEJK17}. However, neither of these algorithms is competitive for the infinite server problem even on the line.

\subsection{Work Function Algorithm}
The Work Function Algorithm (WFA, \cite{chrobak1992server}) for the $k$-server problem achieves a 
competitive ratio of at most $2k-1$, which is the best known upper bound 
for general metric spaces \cite{KoutsoupiasP95}. Given a sequence of 
requests $r_1,r_2,\dots$ and a configuration $C$ (\ie a multiset of server 
positions), the work function $w_t(C)$ is defined as the minimal cost of 
serving the first $t$ requests and ending up in configuration $C$. If 
$C_{t-1}$ is the server configuration before the $t$th request, the 
algorithm moves to a configuration $C_t$ that contains $r_t$ and 
minimizes the quantity \begin{align}
w_t(C_t)+d(C_{t-1},C_t)\,,\label{eq:WFADef}
\end{align}
where $d(C_{t-1},C_t)$ is the cost of moving from $C_{t-1}$ to $C_t$.

\begin{proposition}
	The WFA is not competitive for the infinite server problem on the line.
\end{proposition}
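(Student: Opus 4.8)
The plan is to exhibit an explicit request sequence on the half-line $[0,\infty)$ (with the source at $0$) on which the ratio $\ALG(\sigma)/\OPT(\sigma)$, with $\ALG$ the WFA, grows without bound. By the reduction of Section~\ref{sec:redBounded} I may assume all requests lie in a short interval $I$ whose distance $\Delta$ from the source is arbitrarily large; this is the regime that isolates the pathology, since spawning a fresh server costs $\approx\Delta$ whereas moving a server within $I$ costs at most the tiny diameter of $I$. First I would record the elementary properties of the infinite-server work function in this regime: serving any prefix requires at least one spawn, the value of $w_t$ on a configuration with ``one server in $I$'' differs from its value on a ``two servers in $I$'' configuration essentially by the spawning cost $\Delta$, and among all configurations ending with a prescribed number of servers in $I$ the minimizer is attained by a simple drag/shuffle strategy. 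These facts pin down, step by step, which configuration $C_t$ minimizes the WFA objective \eqref{eq:WFADef}, and in particular show that WFA is extremely reluctant to spawn and instead relocates the servers it already has.

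The heart of the construction is an adaptive adversary, in the spirit of the one used in Theorem~\ref{thm:3.146Lb}, that always requests a point currently left uncovered by WFA, chosen so that the cheapest response according to \eqref{eq:WFADef} is to \emph{drag} an existing server across $I$ rather than to spawn. Because $\Delta$ dominates the diameter of $I$, WFA is forced to keep relocating the same small pool of servers back and forth, and I would show that the two configurations which alternately minimize \eqref{eq:WFADef} stay a fixed distance apart for a long time, so that the total distance travelled by WFA is proportional to the number of requests. In contrast, the offline optimum, which sees the whole sequence, spawns once a fixed set of servers covering all requested points and then serves everything for free, paying only $O(\Delta)$ in total. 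Letting the number of requests, and then $\Delta$, tend to infinity drives the ratio above any constant; the scaling argument used for the line (as in the corollary following Theorem~\ref{thm:3.146Lb}) shows that the additive constant in the definition of competitiveness cannot absorb this growth.

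The main obstacle is precisely the step asserting that WFA does not \emph{consolidate}. A priori, once the accumulated shuffling cost is large the work function may begin to favor configurations with more servers, at which point WFA would spawn the missing servers once and thereafter serve the remainder for free, keeping its total cost within a constant factor of $\OPT$; indeed this is exactly what happens on naive two-point instances, where the back-and-forth turns out to be only a transient lasting $O(\Delta)$ steps. The crux is therefore a precise analysis of the work function showing that its minimizer genuinely flip-flops between two distant configurations for arbitrarily long, equivalently that the distance WFA actually travels exceeds the growth of $w_t$ by an unbounded factor. This is possible only because the number of servers is infinite, so that the classical potential bounding a WFA's movement by $(2k-1)\OPT$ degenerates; making the flip-flopping robust rather than transient, through a careful choice of the adaptive request sequence and of the geometry of $I$ relative to $\Delta$, is where the real work lies.
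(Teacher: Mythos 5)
Your construction starts in the right place (a cluster of points whose distance from the source dwarfs its diameter, and an adaptive adversary that always requests an uncovered point), but the step you yourself flag as the crux is not merely unproven --- it fails, so the proof cannot be completed along the route you describe. The WFA does consolidate: the work function of a configuration with an additional spawned server in the cluster stays bounded by roughly the spawning cost, whereas the work function of every configuration with too few servers there grows without bound as local shuffling accumulates, so after $O(\Delta)$ worth of shuffling the minimizer of \eqref{eq:WFADef} is a configuration with a fresh server and WFA spawns. The ``flip-flopping between two distant configurations for arbitrarily long'' that your accounting needs (WFA paying proportionally to the number of requests while $\OPT$ pays only $O(\Delta)$) therefore never materializes; each bout of shuffling is exactly the transient you observe on two-point instances. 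A side remark: you do not need the reduction of Section~\ref{sec:redBounded} to place the requests in a far-away interval --- that reduction concerns the problem, not a fixed algorithm, and for a lower bound you may simply choose such a request sequence.

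The missing idea is to exploit the consolidation rather than fight it. In the paper's argument the points $p_1,p_2,\dots$ lie in $[1,1+\delta]$ with $\delta\to 0$, the adversary requests the uncovered $p_i$ of minimal index, and one observes that WFA spawns its $k$th server only once the offline optimum of the prefix seen so far is to place $k$ servers at $p_1,\dots,p_k$; hence $\OPT(\sigma_k)=k+o(1)$ and $\OPT$ increases by only $1+o(1)$ during the phase in which WFA owns exactly $k$ servers. During that phase WFA acts as a $k$-server algorithm facing the classical uncovered-point adversary on at least $k+1$ points, so by the lower bound of $k$ on the competitive ratio of any $k$-server algorithm its cost in that phase is at least $k+o(1)$. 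Summing over $k=1,\dots,n-1$ gives $\ALG(\sigma_n)=\Omega(n^2)$ against $\OPT(\sigma_n)=n+o(1)$, and the unbounded ratio comes from this quadratic-versus-linear gap, not from WFA refusing to spawn. Without this per-phase $k$-server lower bound and the summation over phases, your outline has no mechanism for producing a superconstant ratio.
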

\begin{proof}
	Let the source be at $0$ and let $p_1,p_2,\dots$ be infinitely many points in the interval $[1,1+\delta]$ for some small $\delta>0$. Consider the request sequence that always requests the point $p_i$ 
	with $i$ minimal such that $p_i$ is not occupied by an online server. Let $\sigma_k$ be the prefix of this request sequence until the WFA spawns its $k$th server.
	It is easy to see that the WFA spawns its $k$th server only if the optimal way of serving the already seen requests is to bring $k$ servers to the points $p_1,\dots,p_k$. In particular, $\OPT(\sigma_k)= k+o(1)$ as $\delta\to0$. Thus, the optimal cost increases by $1+o(1)$ during the period when the WFA has $k$ spawned servers, and the same is true for the optimal cost of an offline algorithm that is restricted to using $k$ servers only. Let $\cost_k$ be the cost incurred to the WFA during this period. Due to the lower bound of $k$ on the competitive ratio of any $k$-server algorithm, $\cost_k$ is at least $k$ times this increase of the optimal cost (up to an additive error of order $o(1)$ as $\delta\to0$), \ie $\cost_k\ge k+o(1)$. Thus, the total cost of WFA given the request sequence $\sigma_n$ is at least
	\begin{align*}
	\sum_{k=1}^{n-1}\cost_k = \Omega(n^2).
	\end{align*}
	Meanwhile, the optimal cost is $\OPT(\sigma_n)=n+o(1)$. Letting $n$ tend to infinity we obtain an unbounded competitive ratio.
\end{proof}

\subsection{Balance and Balance2}
The algorithm \emph{Balance} serves a request $r$ by sending a server $x$ 
that minimizes the quantity $D_x+d(x,r)$, where $D_x$ is the cumulative 
distance traveled by $x$ so far and $d(x,r)$ is the distance between $x$ and 
$r$. For the $k$-server problem, Balance is $k$-competitive on metric 
spaces with $k+1$ points \cite{ManasseMS88} and for weighted paging 
\cite{ChrobakKPV91}. Young showed that for weighted paging against a 
weak adversary with $h$ servers the competitive ratio of Balance is 
$k/(k-h+1)$ \cite{Young94}. On general metric spaces however, Balance 
has unbounded competitive ratio, even if $k=2$ \cite{ManasseMS88}. It is 
therefore unsurprising that it is also not competitive for the infinite server 
problem.

\begin{proposition}
	Balance is not competitive for the infinite server problem on 
	the line.
\end{proposition}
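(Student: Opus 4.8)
The plan is to exhibit a single fixed request sequence on the line, contained in a tiny cluster far from the source, on which Balance is forced to spawn a brand-new server for \emph{every} request while the offline optimum serves everything with one server. Put the source at $0$ and choose $n$ points $p_1>p_2>\dots>p_n$ inside the interval $[1,2]$, e.g.\ $p_i=1+(n-i)\delta$ with $\delta=1/n^2$. The request sequence is simply $p_1,p_2,\dots,p_n$, that is, the points in order of \emph{decreasing} distance from the source, each requested exactly once.

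First I would pin down Balance's behaviour. The key structural fact on the line is that a server that has only ever moved away from the source has cumulative distance $D_x$ exactly equal to its distance from the source, so relocating it outward to a new request ties with spawning, whereas relocating it inward is strictly \emph{more} expensive than spawning. By requesting the points in decreasing order, I arrange that every new request $p_{j+1}$ lies strictly closer to the source than all currently occupied points $p_1,\dots,p_j$. Hence, by a straightforward induction (no server ever moves, so each stays at its spawn position with $D$ equal to that position), every server already in the field sits at some $p_i>p_{j+1}$ with $D_x=p_i$, giving Balance value $p_i+(p_i-p_{j+1})=2p_i-p_{j+1}>p_{j+1}$, which strictly exceeds the value $p_{j+1}$ of a fresh source server. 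Balance therefore spawns and strictly prefers it over any move; in particular no ties arise and no server ever moves after spawning. Consequently Balance spawns one server per request, for total cost $\sum_{i=1}^n p_i\ge n$.

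The offline cost I would bound by a single sweeping server: send one server from $0$ out to $p_1$ and then monotonically inward through $p_2,\dots,p_n$, for total cost $p_1+(p_1-p_n)\le 3=O(1)$. Since the online cost grows without bound while the offline cost stays bounded, no $\rho$ and additive constant can satisfy the competitiveness inequality as $n\to\infty$; the ratio is already at least $n/3$, so Balance is not competitive.

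The main obstacle, and the reason for the decreasing-distance arrangement, is the pervasive tie between moving and spawning on the line: because moving a freshly spawned server further out costs exactly what spawning costs, a naive cluster of nearby points would leave Balance's choice to its (undetermined) tie-breaking rule and muddy the analysis. Ordering the requests so that each one is strictly interior to the occupied points converts every alternative into a strictly inward, hence strictly suboptimal, move, which forces the clean ``always spawn'' behaviour regardless of how ties would otherwise be resolved. The remaining steps are routine arithmetic.
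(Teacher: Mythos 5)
Your proof is correct and follows essentially the same route as the paper's: requests at decreasing distances within a tiny cluster near $1$ force Balance to spawn a fresh server for every request (since relocating an existing server inward costs $2p_i-p_{j+1}>p_{j+1}$), while a single offline server sweeping outward and then inward serves everything for $O(1)$. The paper uses $r_i=1-i\epsilon$ and lets $\epsilon\to 0$, but the argument and the resulting unbounded ratio are the same.
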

\begin{proof}
	Suppose all servers start at source $0$ and consider the request 
	sequence $r_0,r_1,r_2,\dots,r_n$ where $r_i=1-i\epsilon$. As 
	$\epsilon\to0$, the optimal cost tends to $1$ whereas the cost of 
	Balance tends to $n+1$. Since $n$ can be arbitrarily high, this shows an 
	unbounded competitive ratio.
\end{proof}

The intuitive problem of Balance is that it is not greedy enough. The 
algorithm \emph{Balance2} by Irani and Rubinfeld \cite{IraniR91} 
compensates for this weakness by giving more weight to the distance 
between the server and the request: To serve request $r$, Balance2 sends a 
server $x$ that minimizes the quantity $D_x+2d(x,r)$. Irani and Rubinfeld 
showed that, unlike Balance, Balance2 is competitive for two servers 
(achieving a competitive ratio of at most $10$) and they conjectured that it 
is also competitive for any other finite number of servers \cite{IraniR91}.

However, for the infinite server problem this algorithm is also not 
competitive:

\begin{proposition}
	Balance2 is not competitive for the infinite server problem on the line.
\end{proposition}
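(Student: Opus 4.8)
The plan is to exhibit an adaptive request sequence, all of whose requests lie in a tiny interval far from the source, on which Balance2 is forced to spawn arbitrarily many servers from the source while the offline optimum pays only $O(1)$. Fix the source at $0$ and work inside a cluster $I=[1,1+\eta]$ for a tiny $\eta$ to be chosen at the end (think $\eta=n^{-2}$, where $n$ is the number of phases). The guiding observation is a threshold characterization of when Balance2 spawns: for a request $r\in I$, reusing a server $x$ has decision value $D_x+2d(x,r)$, whereas spawning has value $2d(0,r)\approx 2$; since $d(x,r)\le\eta$ for $x\in I$, Balance2 spawns a fresh server precisely when every server it currently holds has accumulated travel $D_x\gtrsim 2$, \ie more than twice its distance from the source. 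A server just spawned into $I$ has $D_x\approx 1$, so it is ``young'' and will be reused; only after its accumulated travel has roughly doubled does it become ``tired'' and trigger the next spawn.

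The sequence proceeds in $n$ phases, each of which first forces one expensive spawn and then cheaply (for the adversary) ``ages'' the new server up to the spawning threshold. At the start of phase $m$ all servers held by Balance2 are tired, so requesting a fresh point $c_m\in I$ not covered by Balance2 forces it to spawn a new server $s_m$ there at online cost $\approx 1$; the adversary serves this by moving one of a constant number of its own servers within $I$, at cost $O(\eta)$. To age $s_m$, I then oscillate requests between two fresh points $a_m,b_m\in I$, chosen disjoint from all points already occupied by tired Balance2 servers. Because $s_m$ is the unique young server, Balance2 repeatedly shuttles exactly $s_m$ between $a_m$ and $b_m$, raising $D_{s_m}$ by the tiny amount $d(a_m,b_m)$ per step, until $D_{s_m}$ reaches the threshold $\approx 2$; this costs Balance2 a further $\approx 1$, while the adversary keeps two servers stationed at $a_m,b_m$ for the whole phase and pays nothing for the oscillation. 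Summing over phases gives an online cost $\ge(2-o(1))n$ against an offline cost of $O(1)$ (a fixed charge $\approx 1$ to reach $I$ once, plus $O(\eta)$ of local repositioning per phase), so the ratio diverges as $n\to\infty$.

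The main obstacle, and the reason the previous two impossibility proofs do not transfer, is a ``freezing'' phenomenon created by Balance2's weight-$2$ rule: once $D_x\gtrsim 2$, moving $x$ anywhere inside $I$ has decision value $\ge 2$, which is never cheaper than a fresh spawn, so Balance2 will never move a tired server again. Consequently every tired server permanently occupies its current point, and the oscillation anchors $a_m,b_m$ used in phase $m$ are blocked for all later phases. The key to circumventing this is non-discreteness: since $I$ is an interval it contains a continuum of points, so at every phase there remain (in fact uncountably many) points of $I$ unoccupied by the at most $m$ tired servers accumulated so far, from which fresh $a_m,b_m,c_m$ can be drawn. This is exactly where the hypothesis that we are on the line, rather than a discrete metric, is used.

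The steps I expect to require the most care are (i) verifying that during the oscillation Balance2 indeed always prefers moving $s_m$ over both spawning and disturbing a tired server, which pins down the exact moment at which each phase must end, and that the residual gap of order $\eta$ between $D_{s_m}$ and $2$ is still enough to force the spawn that opens the next phase; and (ii) checking that the adversary can maintain coverage of $c_m,a_m,b_m$ using only a bounded number of servers repositioned within $I$, so that the total offline cost stays $O(1)$ uniformly in $n$. Both reduce to routine estimates once $\eta$ is taken small enough relative to $1/n$, after which letting $n\to\infty$ yields an unbounded competitive ratio.
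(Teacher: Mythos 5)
Your overall strategy --- phases in which one freshly spawned server is ``aged'' by cheap oscillations inside a small far-away cluster until Balance2 prefers to spawn again, against an offline player using $O(1)$ servers --- is exactly the strategy of the paper's proof. The gap is in your freezing claim and in the assertion that what remains is ``routine estimates once $\eta$ is small enough''. The decision value of spawning for a request at $r\in[1,1+\eta]$ is $2d(0,r)\in[2,2+2\eta]$, and the exact condition for a server at position $p$ to lose to spawning on a request to its right is $D_x>2p$, not $D_x\gtrsim 2$: the threshold is position-dependent inside a band of width $2\eta$, and every quantity your argument relies on (the increments to $D_{s_m}$, the anchor separations, the distances to tired servers) lives inside that same band, so shrinking $\eta$ rescales everything proportionally and decides nothing. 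Concretely, your oscillation can end on the ``wrong'' anchor: if the retirement-triggering request is at the left anchor $a_m$ while $s_m$ sits at the right anchor $b_m$, then $s_m$ retires with $D_{s_m}\ge 2p_{b_m}-4(p_{b_m}-p_{a_m})<2p_{b_m}$, and such a server is \emph{not} frozen --- for every later request to its right it strictly beats spawning. A single such server breaks both halves of your induction: the forced spawn at the start of the next phase (which needs $D_x>2d(0,c)-2d(x,c)$ for \emph{all} existing servers) and the $\Omega(1)$ aging cost per phase (Balance2 can serve the new oscillation with the unfrozen old server at cost $O(\eta)$ per step, never pushing $D_{s_m}$ to its threshold).

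The paper's construction repairs exactly this by giving up the static cluster: phase $i$ uses the two points $1-2i\epsilon$ and $1-(2i+1)\epsilon$, so the requests drift toward the source and the spawn value drops by $4\epsilon$ per phase, while the phase-$i$ server is retired only once its cumulative distance exceeds $2-(4i+5)\epsilon$; sitting at least $\epsilon$ to the right of all later requests, its decision value is then at least $2-(4i+3)\epsilon$, strictly above the next phase's spawn value $2-(4i+4)\epsilon$, so it is genuinely frozen and the next spawn is genuinely forced. This calibrated leftward drift is the structural ingredient your proposal is missing; if you force your anchors $c_m,a_m,b_m$ to march left by a definite amount per phase and recalibrate the retirement threshold accordingly, you essentially recover the paper's argument.
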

\begin{proof}
	Suppose the source is at $0$ and fix some small constant $\epsilon>0$. 
	The request sequence consists of several phases, starting with phase 
	$0$. Phase $i$ consists of alternating requests at $1-2i\epsilon$ and 
	$1-(2i+1)\epsilon$. We will ensure that all requests of a phase are 
	served by the same online server, and we call this the active server. As 
	soon as the cumulative distance traveled by the active server exceeds 
	$2-(4i+5)\epsilon$, the phase ends and a new phase begins. Note that 
	this means that the active server of a phase will not be used to serve any 
	request of a subsequent phase because, by definition of Balance2, the 
	algorithm would rather spawn a new server. Thus, the first request of 
	each phase is served by spawning a new server, which becomes the 
	active server of that phase. While the cumulative distance of the active 
	server is at most $2-(4i+5)\epsilon$ and since its distance from the next 
	request of the phase is always exactly $\epsilon$, its associated quantity 
	$D_s+2d(s,r)$ is at most $2-(4i+3)\epsilon$. Hence, Balance2 rather 
	uses this server during the phase instead of spawning a new server. 
	Thus, it is indeed the active server that serves \emph{all} requests of its 
	phase.
	
	Let $n$ be the number of phases and choose $\epsilon$ small enough 
	so that all requests are in the interval $[1/2,1]$. Thus, the cost of 
	Balance2 is $\Omega(n)$.
	
	An offline algorithm could serve all requests with two servers only that 
	move to $1$ and $1-\epsilon$ initially and then back towards $1/2$, 
	always covering the two points that are requested during a phase, 
	resulting in an offline cost of less than $3$. As $n$ goes to infinity, the 
	ratio between online and offline cost becomes arbitrarily large.
\end{proof}

\subsection{Double Coverage Variants}
Perhaps more surprising than for WFA and balancing algorithms is that a class of algorithms 
extending the Double Coverage (DC) algorithm \cite{ChrobakKPV91} is also not competitive for the infinite server problem. The basic DC algorithm on the line serves each request by an adjacent server. If the request lies between two servers, both servers move towards it at equal speed until one of them reaches the request. A sensible extension of this algorithm seems to be to give different speeds to 
servers, so that they move away from the source faster than towards it.

We consider here only the half-line $[0,\infty)$ with the source at the left border 0. Let $x_i$ be the position of the $i$th server 
from the right. We use the notation $x_i$ both for its position and 
for the server itself.  As servers do not overtake each other, $x_i$ is the 
$i$th spawned server. Let
$\mathcal{S} = \{s_i \ge 1
\mid i \in \mathbb{N}\text{ and } i \ge 2\}$ for a monotonic (non-decreasing or non-increasing) sequence of speeds $s_i$. The algorithm $\mathcal{S}$-DC is defined as follows:
\begin{itemize}
	\item
	If there exist servers $x_{i+1}$ and $x_i$ to the left and right of the
	request, move them towards it with speeds $s_{i+1}$ and 1
	respectively until one of the two reaches it.
	\item
	If a request does not have a server to its right, move the rightmost
	server to the request.
\end{itemize}
If $s_i=1$ for all $i$, this is precisely the original DC algorithm.

We will prove that $\mathcal S$-DC is not competitive. The intuitive reason is that servers move to the right either too slowly or too quickly: Imagine 
repeatedly requesting the same $n$ points in some small interval away from the source, until $\mathcal S$-DC covers all $n$ points. One case is that $\mathcal S$-DC spawns too slowly and is therefore defeated by an adversary covering these $n$ 
positions immediately with $n$ servers. In the other case, the adversary will also use $n$ servers to cover the initial group of requests and then shift its group of servers slowly towards the source, always making requests at the new positions of these offline servers. As $\mathcal S$-DC tries to cover the new requests, it is tricked into spawning 
too many servers. Both 
cases lead to an unbounded competitive ratio.

The proof consists of several lemmas. The lemmas hold also for non-monotonic speeds and we use monotonicity only to easily combine the lemmas in the end.

A useful property of $\mathcal S$-DC is that its cost can be calculated using only the final positions of the servers.

\begin{lemma} \label{lemma:online_cost}
	Let $x_1\ge x_2 \ge\dots$ be the server positions of $\mathcal{S}$-DC 
	after serving a sequence of requests. Then the cost paid is 
	$\sum_{i=1}^\infty z_i x_i$ where
	\begin{align}
		z_1 &= 1\\
		z_i &= \frac{z_{i-1}}{s_i} + 1 + \frac{1}{s_i}\,.
	\end{align}
\end{lemma}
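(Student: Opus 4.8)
The plan is to prove Lemma~\ref{lemma:online_cost} by tracking how much each server contributes to the total cost as a function of its \emph{final} position, exploiting the linearity of cost in the server movements. The key observation is that for $\mathcal{S}$-DC, whenever a server moves right it does so by being dragged along with another server or by being the rightmost server serving a fresh request; these two types of motion are what generate cost. First I would set up the accounting so that each unit of rightward displacement of server $x_i$ is charged together with the simultaneous displacement it induces on the server to its right.

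The central idea is to express the total cost recursively. Consider the last server $x_i$ (counting from the right, so $x_1$ is rightmost). Whenever $x_i$ moves right to serve a request lying between $x_{i}$ and $x_{i-1}$, by the definition of $\mathcal{S}$-DC the server $x_{i-1}$ on the right moves at speed $1$ while $x_i$ moves at speed $s_i$ — so every unit of rightward progress of $x_{i-1}$ in such a step is accompanied by $s_i$ units of progress of $x_i$, contributing $1 + s_i$ to the cost per unit that $x_{i-1}$ advances. Conversely, $x_i$ may also advance while $x_{i-1}$ does not (namely when the request falls to the right of $x_{i-1}$, in which case $x_{i-1}$ is the mover at speed $1$ relative to $x_{i-2}$, or when $x_i$ is spawned and moved as the rightmost server). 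The plan is to argue that every unit of distance that the final position $x_i$ accounts for is either ``inherited'' from the displacement of $x_{i-1}$ (scaled by $1/s_i$, since $x_i$ covered $s_i$ units while $x_{i-1}$ covered one), or is ``fresh'' cost paid directly. Carrying this through gives precisely the recursion $z_i = z_{i-1}/s_i + 1 + 1/s_i$, where $z_{i-1}/s_i$ captures the inherited contribution, the $+1$ is the direct cost of $x_i$ reaching its final position, and the $+1/s_i$ is the coupled cost of $x_{i-1}$ during the dragging steps.

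The cleanest way to make this rigorous is probably an induction on the number of requests (or on the number of spawned servers), maintaining the invariant that after any prefix the accumulated cost equals $\sum_i z_i x_i$ with the $x_i$ being the current positions and $z_i$ the stated constants. For the base case with one server, cost equals its position, matching $z_1 = 1$. For the inductive step I would case-split on the type of move: a spawn, a pure rightmost-server move, or a two-server squeeze. In the squeeze case, if $x_{i-1}$ moves right by $\mathrm{d}t$ and $x_i$ by $s_i\,\mathrm{d}t$, the actual cost incurred is $(1+s_i)\,\mathrm{d}t$, and I must check that the change in $\sum_j z_j x_j$ equals this same amount — i.e. that $z_{i-1} + z_i s_i = 1 + s_i$, which is exactly the defining recursion rearranged.

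The main obstacle I anticipate is handling the bookkeeping at the boundary between server indices cleanly, since a squeeze move simultaneously changes two positions and the indices shift when a new server is spawned. In particular one must verify that re-indexing upon spawning does not break the invariant, and that servers never overtake (so the ordering $x_1 \ge x_2 \ge \cdots$ and hence the identification of $x_i$ with the $i$th spawned server is preserved). The monotonicity of the speeds is not needed for this lemma (as the paper notes the lemmas hold for non-monotonic speeds), so I would avoid invoking it here and keep the argument purely local to each elementary move.
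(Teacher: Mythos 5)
Your overall strategy --- maintaining the invariant that the accumulated cost equals $\sum_j z_j x_j$ for the current positions and checking it on each elementary move --- is sound and is essentially a local reformulation of the paper's argument. The paper instead decomposes each server's net displacement as $x_i = r_i - l_i$ (cumulative right and left travel), uses the coupling $l_i = r_{i+1}/s_{i+1}$ to write both the true cost and $\sum_i z_i x_i$ as linear combinations of the $r_i$, and matches coefficients; your per-move invariant encodes exactly the same coupling, just differentially. Your remarks that monotonicity of the speeds is not needed here, and that spawning causes no re-indexing trouble (new servers receive higher indices since indexing is from the right, and a spawn is just a squeeze involving a server still at the source), are also correct.

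However, the verification of the squeeze case as you state it would fail, because you have the direction of $x_{i-1}$'s motion backwards. When a request falls between $x_i$ (left) and $x_{i-1}$ (right), the algorithm moves $x_i$ rightward at speed $s_i$ and $x_{i-1}$ \emph{leftward} at speed $1$; $x_{i-1}$ does not ``advance''. Hence in time $\mathrm{d}t$ the quantity $\sum_j z_j x_j$ changes by $(z_i s_i - z_{i-1})\,\mathrm{d}t$, not $(z_i s_i + z_{i-1})\,\mathrm{d}t$, and the identity to check against the incurred cost $(1+s_i)\,\mathrm{d}t$ is $z_i s_i - z_{i-1} = 1 + s_i$, which is indeed the recursion $z_i = z_{i-1}/s_i + 1 + 1/s_i$ rearranged. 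The identity you wrote, $z_{i-1} + z_i s_i = 1 + s_i$, is false in general (already for $i=2$ and $s_2=1$ it reads $1+3=2$). The fix is purely local --- correct the direction and hence the sign --- after which your induction goes through and coincides in substance with the paper's proof.
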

\begin{proof}
	
	The movement $x_i$ of each server can be written as $x_i = r_i - l_i$ 
	where $r_i$ and $l_i$ are the cumulative distances traveled by that server 
	while moving to the right and left respectively.
	By definition of $\mathcal{S}$-DC, for all $i$ 
	we have
	\begin{align*}
		l_i = \frac{r_{i+1}}{s_{i+1}},
	\end{align*}
	since any right move (apart from the rightmost server) is accompanied by 
	a left move of another server.
	Observe that the online cost is
	\begin{align}
		\cost 
		&= \sum_{i=1}^\infty (r_i + l_i) = \sum_{i=1}^\infty (r_i + 
		\frac{r_{i+1}}{s_{i+1}}) 
		\nonumber\\
		&= \sum_{i=1}^\infty r_i + \sum_{i=2}^\infty \frac{r_{i}}{s_{i}} = r_1 + 
		\sum_{i=2}^\infty r_i(1+\frac{1}{s_i})\,.\label{rs}
	\end{align}
	Similarly, 
	\begin{align}
		\sum_{i=1}^\infty z_i x_i 
		&= \sum_{i=1}^\infty z_i(r_i - l_i) \nonumber\\
		&= \sum_{i=1}^\infty z_i r_i - \sum_{i=1}^\infty z_i 
		\frac{r_{i+1}}{s_{i+1}} 
		\nonumber\\
		&= z_1 r_1 + \sum_{i=2}^\infty r_i \left(z_i - 
		\frac{z_{i-1}}{s_i}\right)\,. 
		\label{zs}
	\end{align}
	By equating \eqref{rs} and \eqref{zs} term by term, we get the desired
	recurrence for $z_i$.
\end{proof}

The next lemma takes care of the case when online servers spawn too slowly.
\begin{lemma}\label{lemma:lower-slow}
	If the speeds in $\mathcal S$ satisfy 
	$\liminf_{n\to\infty}\sqrt[n]{\prod_{i=2}^ns_i}=1$ then 
	$\mathcal{S}$-DC is not competitive.
\end{lemma}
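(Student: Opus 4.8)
The plan is to combine the exact cost formula of Lemma~\ref{lemma:online_cost} with a cluster-of-requests construction, thereby reducing the whole claim to a growth statement about the coefficients $z_i$. First I would record a convenient closed form. Writing $T_i:=\prod_{l=2}^i s_l$ (so $T_1=1$) and $g_i:=\frac{1}{T_i}\sum_{j=1}^i T_j$, a direct check against the recurrence $z_i=z_{i-1}/s_i+1+1/s_i$ shows that $g_i$ satisfies $g_i=g_{i-1}/s_i+1$ with $g_1=1$, and hence $z_i=2g_i-1$. In particular every $z_i>0$, while $g_i>1$ for $i\ge 2$, so that $s_i=g_{i-1}/(g_i-1)$ and $\log s_i=\log g_{i-1}-\log(g_i-1)$.

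Next I would exhibit the bad request sequence. Fix $n$, place $n$ distinct points $p_1,\dots,p_n$ in a tiny interval $[1,1+\eta]$ far from the source $0$, and repeatedly request the point $p_i$ of smallest index currently unoccupied by $\mathcal S$-DC, stopping once all $n$ points are simultaneously covered. Following the double-coverage dynamics (as in the $n=2$ case, where the second server slowly creeps up to the cluster while the rightmost one bounces within it) one checks that this happens after finitely many requests and that at the stopping time at least $n$ servers occupy positions $\ge 1$. Since by Lemma~\ref{lemma:online_cost} the total cost equals $\sum_i z_i x_i$ with all $z_i>0$, and the $n$ largest final positions are $\ge 1$, the online cost is at least $\sum_{i=1}^n z_i$. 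An offline algorithm that places $n$ servers on $p_1,\dots,p_n$ once and never moves them serves everything for $\OPT\le n(1+\eta)$. Letting $\eta\to 0$, the competitive ratio of $\mathcal S$-DC is at least $\frac{1}{n}\sum_{i=1}^n z_i=\frac{2}{n}\sum_{i=1}^n g_i-1$ for every $n$, so it suffices to prove $\sup_n\frac1n\sum_{i=1}^n g_i=\infty$.

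This last inequality, under the hypothesis $\liminf_n T_n^{1/n}=1$, is the analytic heart. I would argue by contradiction, assuming $\frac1n\sum_{i=1}^n g_i\le C$ for all $n$. Summing the identity for $\log s_i$ telescopes to $\log T_n=\sum_{i=2}^{n-1}\log\frac{g_i}{g_i-1}-\log(g_n-1)$. A Markov-type averaging shows at least $n/2-O(1)$ of the indices $i\le n$ satisfy $g_i\le 2C$, and each such index contributes $\log\frac{g_i}{g_i-1}\ge\log\frac{2C}{2C-1}$, whence $\frac1n\sum_{i=2}^{n-1}\log\frac{g_i}{g_i-1}\ge\frac12\log\frac{2C}{2C-1}=:\gamma>0$ for all large $n$. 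Along the subsequence $m_k$ supplied by the hypothesis, where $\log T_{m_k}/m_k\to 0$, the telescoped identity forces $\frac{1}{m_k}\log(g_{m_k}-1)\ge\gamma-\log T_{m_k}/m_k\to\gamma$, so $g_{m_k}\ge e^{\gamma m_k/2}$ for large $k$. But then $\frac{1}{m_k}\sum_{i=1}^{m_k}g_i\ge g_{m_k}/m_k\to\infty$, contradicting the bound $C$.

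The main obstacle is precisely this analytic step, because the hypothesis controls only the geometric mean of the speeds, which is far too weak for the naive bound $z_i\ge(2i-1)/T_i$ (this already fails, e.g. for speeds with $T_i=e^{\sqrt i}$, where nonetheless $g_i\sim 2\sqrt i$ grows). The trick that extracts superlinear growth from so little is the telescoping formula for $\log T_n$ combined with a two-sided use of the assumed bound on $\frac1n\sum g_i$: once via averaging to make the $\sum\log\frac{g_i}{g_i-1}$ term linear in $n$, and once to rule out an exponentially large $g_{m_k}$. A secondary, more routine technical point is justifying termination of the construction and the ``at least $n$ servers at positions $\ge 1$'' claim, which I would establish by a monotonicity/potential argument on the $\mathcal S$-DC dynamics; since monotonicity of the speeds is not needed here, this is exactly where the remark preceding the lemma (that the lemmas hold for non-monotonic speeds) applies.
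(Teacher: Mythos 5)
Your construction and the reduction are exactly the paper's: request $n$ points clustered in a small interval at distance $1$ from the source until $\mathcal S$-DC covers them all, bound $\OPT$ by $O(n)$ via $n$ fresh offline servers, and use Lemma~\ref{lemma:online_cost} together with $x_i\ge 1$ to get $\cost\ge\sum_{i=1}^n z_i$, so that everything reduces to showing $\frac1n\sum_{i=1}^n z_i$ is unbounded along the subsequence where the geometric mean of the speeds tends to $1$. Where you genuinely diverge is in that analytic core. The paper unravels the recurrence to $z_i=1+\sum_{j=1}^{i-1}2/\prod_{k=j+1}^{i}s_k$, keeps only window lengths $i\le f(n)=\lfloor n/(2+2\log_2\prod_{i=2}^n s_i)\rfloor$, and shows by a pigeonhole over residue classes modulo $i$ that for each such $i$ at least half of the windows $\prod_{k=j+1}^{j+i}s_k$ are at most $2$; this yields $\cost\ge nf(n)/4$ and hence an explicit ratio lower bound of $f(n)/8$. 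You instead pass to the closed form $z_i=2g_i-1$ with $g_i=\frac{1}{T_i}\sum_{j\le i}T_j$ (which is a correct consequence of the recurrence), and derive a contradiction from an assumed bound $\frac1n\sum_i g_i\le C$ by telescoping $\log T_n=\sum_{i=2}^{n-1}\log\frac{g_i}{g_i-1}-\log(g_n-1)$ and using the bound twice: once via Markov to force the telescoped sum to grow linearly in $n$, and once to rule out the exponentially large $g_{m_k}$ that this forces along the subsequence with $\log T_{m_k}=o(m_k)$. I checked the identity, the Markov step (all $g_i\ge1$, the map $x\mapsto\log\frac{x}{x-1}$ is decreasing), and the final contradiction; the argument is sound. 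Your route is cleaner and arguably more conceptual, while the paper's gives a concrete quantitative lower bound on the ratio. The one point you defer to a "routine" monotonicity argument --- termination of the request process, i.e.\ that $\mathcal S$-DC eventually covers all $n$ points, which also gives the "at least $n$ servers at positions $\ge1$" claim for free at the stopping time --- is left equally implicit in the paper, so there is no gap relative to its level of rigor.
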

\begin{proof}
	For this lower bound we have requests on $n$ arbitrary positions in the 
	interval $[1,2]$, until $\mathcal S$-DC covers
	them all.
	
	The optimal cost is at most $2n$. This can be achieved by spawning a 
	fresh 
	server for each requested position.
	
	Since for every spawned online server we have
	$x_i \ge 1$, by Lemma \ref{lemma:online_cost} the online cost is
	$\cost=\sum_{i=1}^n z_i x_i \ge
	\sum_{i=1}^n z_i$. Unraveling the recurrence we get that $z_i = 1 +
	\frac{2}{s_i}  + \frac{2}{s_is_{i-1}} + \ldots + \frac{2}{s_i \cdot
		\ldots \cdot s_2}$. Thus,
	\begin{align}
		\cost &\ge n + \sum_{i=1}^{n-1} \sum_{j=1}^{n-i}  
		\frac{2}{\prod_{k=j+1}^{j+i}s_k}\nonumber\\
		&\ge \sum_{i=1}^{f(n)} \sum_{j=1}^{n-i}  
		\frac{2}{\prod_{k=j+1}^{j+i}s_k}\,.\label{eqn:slow_cost}
	\end{align}
	where
	\begin{align*}
		f(n) = \left\lfloor\frac{n}{2+2\log_2\prod_{i=2}^n s_i}\right\rfloor \le 
		\frac{n}{2}\,.
	\end{align*}
	We argue that for each $i=1,\dots,f(n)$, it holds for at least half of the 
	values of $j=1,\dots,n-i$ that $\prod_{k=j+1}^{j+i} s_k\le 2$. Indeed, 
	suppose this were not the case for some $i$. Let us partition the set 
	$J=\{1,\dots,n-i\}$ of $j$-values into subsets $J_0,\dots,J_{i-1}$, where 
	$J_m$ contains precisely those numbers from $J$ that are congruent to 
	$m$ modulo $i$. By assumption, we have $\prod_{k=j+1}^{j+i} s_k> 2$ 
	for at least half the values $j\in J$, so this must also be true for at least 
	half the values $j\in J_m$ for some $m$. However, this would mean that
	\begin{align*}
		\prod_{k=2}^n s_k \ge \prod_{j\in J_m}\prod_{k=j+1}^{j+i}s_k > 
		2^{|J_m|/2} \ge 2^{\floor{\frac{n-i}{i}}/2} \ge 2^{\frac{n}{2f(n)}-1} \ge 
		\prod_{i=2}^n s_i\,,
	\end{align*}
	a contradiction because the second inequality is strict.
	
	Thus, continuing from \eqref{eqn:slow_cost} we can further bound the 
	online cost as
	\begin{align*}
		\cost \ge f(n) \frac{n-f(n)}{2} \ge \frac{nf(n)}{4}\,.
	\end{align*}
	Since the optimal cost is at most $2n$, the competitive ratio is at least 
	$f(n)/8$. However, $f(n)$ is unbounded because
	\begin{align*}
		\frac{n}{2+2\log_2\prod_{i=2}^n s_i} = 
		\frac{1}{\frac{2}{n}+2\log_2\sqrt[n]{\prod_{i=2}^n s_i}}
	\end{align*}
	and the denominator in the last term gets arbitrarily close to $0$.
\end{proof}

The case of servers being spawned too aggressively is handled by the following lemma.
\begin{lemma}\label{lemma:lower-fast}
	If there exists an unbounded function $f(n)$ such that for each 
	$k\in\mathbb{N}$ we have $\prod_{i=k}^{k+n} s_{i} \ge f(n)$, then 
	$\mathcal{S}$-DC is not competitive. In particular, if 
	$\liminf_{i\to\infty}s_i>1$ then $\mathcal S$-DC is not competitive.
\end{lemma}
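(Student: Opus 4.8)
The plan is to pit $\mathcal{S}$-DC against an adversary that keeps only $n$ servers in a cluster at distance roughly $1$ from the source and repeatedly drags this cluster towards the source, always requesting the new positions of its own servers. The guiding intuition is the one sketched before the lemma: because a request lying between two online servers makes the left (outer) server race outward at speed $s_{i+1}$ while the right (inner) server creeps inward at speed $1$, each attempt by $\mathcal{S}$-DC to follow the cluster inward flings the next server outward by a factor of $s$. By scheduling the requests so that the inward move of the $j$th server is exactly the outward overshoot just produced on the $(j-1)$st server, these overshoots chain together and the displacement at level $j$ is multiplied by $s_j$; after a cascade through servers indexed $k,\dots,k+n$ the outermost displacement is of order $\prod_{i=k}^{k+n} s_i$.

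First I would fix the geometry by choosing gaps $g_1<g_2<\dots$ between consecutive cluster members large enough that, at every step, the inner server reaches the request strictly before the outer server, so that the overshoot equals the full $s\cdot(\text{inner move})$ and is never truncated by the moving server actually hitting the request. This is exactly the regime in which the overshoot compounds, and the gap condition takes the form $g_j\gtrsim (s_{j+1}+1)\,d_j$, where $d_j$ is the scheduled inward move. Verifying that these inequalities hold throughout — so that no two servers collide and every step engages the intended pair — is the bookkeeping core of the construction.

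After the cascade I would read off the online cost from Lemma~\ref{lemma:online_cost}: since $z_i\ge 1$, the exact cost $\sum_i z_i x_i$ is at least the sum of the final server positions, and the cascade deposits servers whose positions sum to order $\prod_{i=k}^{k+n} s_i$. The offline cost, by contrast, is only the total distance the $n$ cluster servers travel. The subtlety is that a single cascade spreads the requests over a region of size $\approx\prod s_i$, so the offline servers must also span that region and $\OPT$ is of the same order, whence the product cancels and the ratio stays bounded. The resolution is to drag the cluster back inward and repeat the cascade over many rounds: the offline servers are reused, paying only once to set up and a controlled amount per round, whereas $\mathcal{S}$-DC is forced to spawn and fling out fresh servers each round, so the online cost accumulates while $\OPT$ does not. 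Tuning the number of rounds together with the gap sizes so that the accumulated $\cost$ exceeds any fixed multiple of $\OPT$ yields an unbounded ratio. The ``in particular'' clause is then immediate: if $\liminf_i s_i>1$ then $s_i\ge 1+\eta$ for all large $i$, so $\prod_{i=k}^{k+n} s_i\ge (1+\eta)^{\,n+1-O(1)}$ uniformly in $k$, an admissible unbounded $f(n)$.

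The main obstacle I anticipate is precisely the tension between the two sides of the ratio: making the overshoot genuinely multiply by $s_j$ at each level requires ever-larger gaps so that the inner server always wins the race, but larger gaps also inflate the distance the offline servers travel and hence $\OPT$. The delicate point is therefore to balance the gap sizes and the number of rounds so that the multiplicative online growth $\prod s_i$ survives division by the offline cost, rather than cancelling against it as it does for a single isolated cascade; carrying out the case analysis of which of the two engaged servers reaches each request first, uniformly over all starting indices $k$, is where essentially all the work lies.
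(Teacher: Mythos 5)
You have the right setting (a cluster of $n$ servers near distance $1$, dragged towards the source over many rounds, exploiting the speed amplification along the chain), but the quantitative mechanism that actually produces an unbounded ratio is missing, and the version you describe does not work. Your plan is to let the cascade amplify a unit-order inward drag into an outward ``fling'' of order $\prod_{i=k}^{k+n}s_i$, and to lower-bound the online cost by final server positions summing to that order. This cannot happen if the requests stay in a bounded region: $\mathcal S$-DC only moves servers \emph{towards} requests, so no server ever travels past the rightmost request, and the positions stay $O(1)$ each. If instead you let the requests spread over a region of size $\prod s_i$ so that the fling is realized, then, as you yourself observe, $\OPT$ grows by the same factor and the ratio does not blow up; your proposed fix (``repeat over many rounds, reuse the offline servers'') does not resolve this, because an offline algorithm spanning a region of size $\prod s_i$ pays that much at least once while you have not shown the online algorithm pays more than a constant factor times that per round. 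You flag this balancing act as ``where essentially all the work lies'' --- but that balancing act \emph{is} the proof, and it is not carried out.

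The paper's resolution inverts the use of the amplification. The outward travel of the server pulled from the source is \emph{capped} at roughly $1$ (the distance from the source to the cluster); equating $v_k\sum_{i=k+1}^{k+n}\prod_{j=i}^{k+n}s_j$ with this fixed distance forces the per-round inward drag to be tiny, $v_k\le 2/(nf(n/2))$. Consequently $\Omega(nf(n/2))$ rounds fit before the cluster has moved from $1$ down to $1/2$, all requests remain in $[1/2,\,1+n\delta]$, and $\OPT\le 2(n+1)(1+n\delta)=O(n)$ since the adversary reuses $n+1$ servers. The per-round online lower bound comes not from long individual moves but from waste: each round a fresh server is spawned and dragged at least $1/2$ from the source while the old outermost server is abandoned forever, so the online cost is $\Omega(nf(n/2))$ and the ratio is $\Omega(f(n/2))$. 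None of this accounting (the smallness of $v_k$, the count of rounds, the $\ge 1/2$ spawn cost per round, the abandoned server) appears in your proposal, so as written it has a genuine gap. Your handling of the ``in particular'' clause, and your observation that gap conditions are needed so that the intended server pair is engaged at each request, are both fine.
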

\begin{proof}
	Consider the following configuration of online positions and requests,
	denoted by circles and crosses respectively.
	
	\begin{center}
		\begin{tikzpicture}
		\node[draw,shape=circle,fill] (source) at (-2.5,0) {};
		\node[draw,shape=circle] (p1) at (7,0) {};
		\node[draw,shape=circle] (p2) at (7.5,0) {};
		\node[draw,shape=circle] (p3) at (8.0,0) {};
		\node[draw,shape=circle] (p4) at (9.0,0) {};
		
		\node[draw,shape=cross out] (s1) at (0.5,0) {};
		\node[draw,shape=cross out] (s2) at (1,0) {};
		\node[draw,shape=cross out] (s3) at (1.5,0) {};
		\node[draw,shape=cross out] (s4) at (2.5,0) {};
		\node[draw,shape=cross out] (s5) at (3.0,0) {};
		
		\draw (source) -- (-1.75,0);
		\draw[loosely dotted] (-1.75,0) -- (-0.5,0);
		\draw (-0.5,0) -- (s1) -- (s2) -- (s3);
		\draw[loosely dotted] (s3) -- (s4);
		\draw (s4) -- (s5) -- (4,0);
		\draw[loosely dotted] (4,0) -- (6,0);
		\draw (6,0) -- (p1) -- (p2) -- (p3);
		\draw[loosely dotted] (p3) -- (p4);
		\draw (p4) -- (10,0);
		
		\draw[->,] (-1,1) node[above] {Newly spawned server at 
			$1-v_1-\delta$} -- (s1);
		\draw[->] (6,-1) node[below] {Starting at $1$} -- (p1);

		%\draw (p1) -- ($ (p1) - (0,0.8) $);
		%\node [below=25pt] at (p1) {1};
		
		%
		\foreach \from/\to in {p1/p2, p2/p3, s1/s2, s2/s3, s4/s5}
		\draw[decoration={brace,mirror,raise=8pt},decorate]%
		($ (\from) + (0.01,0) $) -- node[below=10pt] (start) {$\delta$} ($ (\to) 
		+ %
		(-0.01,0) $);
		
		\draw[decoration={brace,raise=8pt},decorate]%
		($ (p1) + (0.01,0) $) -- node (start) [above=10pt] {$n$}  ($ (p4) + 
		(0.01,0) $);
		
		\draw[decoration={brace,raise=8pt},decorate]%
		($ (s1) + (0.01,0) $) -- node (finish) [above=10pt] {$n+1$}  ($ (s5) + 
		(0.01,0) $);
		
		\draw[->] ($ (start) + (0,0.25) $) to [out=145,in=25] node[above=5pt]%
		{$v_1$} ($ (finish) + (0,0.25) $); 
		
		\end{tikzpicture}
	\end{center}
	We start by spawning $n$ online servers grouped tightly, with
	the leftmost being at distance $1$ from the source and
	a very small gap $\delta$ between them. This is easily
	accomplished by repeating several requests on those
	points. Afterwards, we shift this group of $n$ servers (by means of
	requests on new $n+1$ points) to the left by
	$v_1$, chosen so that the $n+1$ points are covered exactly by the
	$n$ old servers plus a newly spawned one, which occupies the leftmost
	requested position $1-v_1-\delta$.
	
	This is repeated again and again, shifting each time the leftmost $n$ 
	spawned servers a new $v_k$ to the left via multiple requests on
	$n+1$ positions. The goal each time is to pull a new server from the
	source \textit{and} leave one behind forever, thus achieving an
	arbitrarily high competitive ratio for $\mathcal{S}$-DC variants that 
	spawn servers too fast.
	
	The offline cost can be calculated easily. The offline algorithm uses $n$
	servers to cover the first group of $n$ requested points in the interval 
	$[1,1+n\delta]$. Then it adds one more
	server and moves the group of $n+1$ servers to the left to satisfy all of 
	the following
	requests. At most, the group of offline servers will return close to
	the source, yielding an optimal cost of
	\begin{equation}\label{lower1:offline_cost}
	\OPT \le 2(n+1)(1+n\delta) = O(n)
	\end{equation}
	since $\delta$ is very small.
	
	To bound the online cost, we need to compute the values $v_k$ first. Let 
	$\ell_i^k$ and $r_i^k$ denote the cumulative distance to the left and 
	right respectively traveled by $x_i$ during the left shift by $v_k$ of the 
	group $x_k,x_{k+1},\dots,x_{k+n-1}$. The nonzero values among these 
	are
	\begin{align}
		\ell_k^k &= v_k \nonumber\\
		r_{k+1}^k &= v_ks_{k+1} \nonumber\\
		\ell_{k+1}^k &= v_k(1+s_{k+1})\nonumber\\
		r_{k+2}^k &= v_k(s_{k+2}+s_{k+1}s_{k+2})\nonumber\\
		\ell_{k+2}^k &= v_k(1+s_{k+2}+s_{k+1}s_{k+2})\nonumber\\
		\vdots \nonumber\\
		r_{k+n}^k &= v_k(s_{k+n} + s_{k+n-1}s_{k+n} + \ldots + 
		\prod_{j=k+1}^{k+n}s_j) =
		v_k \sum_{i=k+1}^{k+n} \prod_{j=i}^{k+n} s_j\,.\label{eq:rk+nk}
	\end{align}
	On the other hand, the new position of the server $x_{k+n}$ pulled from 
	the source during these moves is $1-\sum_{i=1}^{k}v_i-k\delta$. 
	Equating this with \eqref{eq:rk+nk} and solving for $v_k$ yields (and 
	assuming that $n$ is even)
	\begin{align*}
		v_k= \frac{1 - \sum_{i=1}^{k-1}v_i - k\delta}{1 +
			\sum_{i=k+1}^{k+n} \prod_{j=i}^{k+n} s_j} \le
		\frac{1}{\frac{n}{2} \prod_{j=k+\frac{n}{2}}^{k+n}s_j} \le 
		\frac{2}{nf(\frac{n}{2})}\,.
	\end{align*}
	
	We will calculate the number of repetitions before the left border of
	the group of servers (just) passes $\frac{1}{2}$.
	If $l$ is the number of repetitions, we have
	\begin{align*}
		\frac{1}{2} \le \sum_{k=1}^l (v_k + \delta) \le \frac{2l}{nf(\frac{n}{2})} 
		+ l\delta
	\end{align*}
	and for sufficiently small $\delta$ this means that
	\begin{align*}
		l \ge \frac{n}{5}f\left(\frac{n}{2}\right)
	\end{align*}
	
	If we do $l-1$ repetitions, then each of them will pull a new server at 
	least $1/2$ away from the source, resulting in an online cost of 
	$\Omega(n)\cdot f(\frac{n}{2})$. As the offline cost is $O(n)$ and $f(n)$ 
	is unbounded, the algorithm is not competitive.
\end{proof}

Since the sequence of speeds $s_i$ is monotonic and bounded from 
below by 1, we have either $\lim_{i \rightarrow \infty} s_i = 1$, in which case Lemma~\ref{lemma:lower-slow} applies, or otherwise $\liminf_{i \rightarrow \infty} s_i >1$ and Lemma~\ref{lemma:lower-fast} applies. In any case, the competitive ratio is 
unbounded:

\begin{theorem}\label{thm:dc-var}
	Algorithm $\mathcal{S}$-DC is not competitive for any $\mathcal{S}$.
\end{theorem}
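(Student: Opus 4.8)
The plan is to prove the theorem by a clean case analysis on the limiting behavior of the speed sequence $(s_i)$, using the two lower-bound lemmas as the two cases and invoking monotonicity only to guarantee that these two cases are exhaustive. Since $(s_i)$ is monotonic and bounded below by $1$, the limit $L=\lim_{i\to\infty}s_i$ exists in $[1,\infty]$ and coincides with $\liminf_{i\to\infty}s_i$. I would split on whether $L=1$ or $L>1$.

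First I would treat the case $L=1$, where the goal is to apply Lemma~\ref{lemma:lower-slow}. Its hypothesis is $\liminf_{n\to\infty}\sqrt[n]{\prod_{i=2}^n s_i}=1$, so I would verify that $s_i\to 1$ forces the geometric means to $1$. This is the standard fact that the geometric mean of a convergent sequence converges to its limit: writing $\sqrt[n]{\prod_{i=2}^n s_i}=\exp\bigl(\tfrac1{n-1}\sum_{i=2}^n \ln s_i\bigr)$ and applying a Cesàro/Stolz argument to $\ln s_i\to 0$ gives convergence to $\exp(0)=1$. Because every $s_i\ge 1$, each geometric mean is $\ge 1$, so the limit is exactly $1$ and in particular the $\liminf$ equals $1$. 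Hence Lemma~\ref{lemma:lower-slow} applies and $\mathcal{S}$-DC is not competitive.

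The case $L>1$ is then immediate: here $\liminf_{i\to\infty}s_i=L>1$, which is precisely the hypothesis of the ``in particular'' clause of Lemma~\ref{lemma:lower-fast}, so that lemma directly yields that $\mathcal{S}$-DC is not competitive. Combining the two cases finishes the proof.

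There is no real obstacle here, since all the substantial work lives in Lemmas~\ref{lemma:online_cost}, \ref{lemma:lower-slow} and \ref{lemma:lower-fast}; the only point deserving care is recognizing \emph{why} monotonicity is needed. The two lemmas are stated for arbitrary speeds, but their sufficient conditions---slow spawning detected via the geometric mean tending to $1$, and fast spawning detected via $\liminf s_i>1$---need not cover every sequence: a non-monotonic $(s_i)$ could have $\liminf s_i=1$ while the geometric means stay bounded away from $1$, slipping between both hypotheses. Monotonicity collapses $\liminf$ and $\lim$ into a single value $L$ and thereby guarantees the dichotomy $L=1$ or $L>1$, making the two lemmas jointly exhaustive.
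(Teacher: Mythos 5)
Your proof is correct and follows essentially the same route as the paper: a dichotomy on $\lim_{i\to\infty}s_i$ (which exists by monotonicity), invoking Lemma~\ref{lemma:lower-slow} when the limit is $1$ and Lemma~\ref{lemma:lower-fast} otherwise. Your explicit Ces\`aro verification that $s_i\to1$ forces the geometric means to $1$ fills in a step the paper leaves implicit, and your remark on why monotonicity is needed matches the paper's intent.
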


\section{Reduction to Bounded Spaces}\label{sec:redBounded}

In this section we show a reduction from the infinite server problem on general metric spaces to bounded subspaces. Specifically, a metric space can be partitioned into ``rings'' of points whose distance from the source is between $r^n$ and $r^{n+1}$, where $r>1$ is fixed and $n\in\mathbb Z$. We show that if the infinite server problem is strictly $\rho$-competitive on each ring, then it is competitive on the entire metric space.

\begin{theorem}\label{thm:redBounded}
	Let $M$ be a metric space and $s\in M$ and let $r>1$. For $n\in\mathbb Z$ let $M_n=\{s\}\cup\{p\in M \mid d(s,p)\in[r^n,r^{n+1})\}$. If for each $n$ the infinite server problem on $(M_n,s)$ is strictly $\rho$-competitive, then on $(M,s)$ it is strictly $\frac{4r-1}{r-1}\rho$-competitive.
\end{theorem}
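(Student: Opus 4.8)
The plan is to let the combined algorithm run an independent copy $\ALG_n$ of the strictly $\rho$-competitive algorithm for $(M_n,s)$ on each ring. Since the half-open intervals $[r^n,r^{n+1})$ partition $(0,\infty)$, every non-source request lies in exactly one ring $M_n$, and I serve it with $\ALG_n$ (requests at $s$ are free). Writing $\sigma^{(n)}$ for the subsequence of requests falling in $M_n$, the total online cost is $\sum_n \ALG_n(\sigma^{(n)}) \le \rho\sum_n \OPT_{M_n}(\sigma^{(n)})$ by strict competitiveness, with no additive term. Hence the whole theorem reduces to the purely offline inequality
\[
\sum_n \OPT_{M_n}(\sigma^{(n)}) \le \tfrac{4r-1}{r-1}\,\OPT(\sigma).
\]

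To prove this I would fix an optimal offline solution $O$ for the full space $M$ and turn it into a solution for each ring separately. I may assume $O$ is lazy, so each of its servers moves only directly onto requested points; in particular every server position lies in some $M_n$. For a fixed ring $n$ I build an $M_n$-solution by \emph{shadowing with parking}: for each server $x$ of $O$ I keep one shadow confined to $M_n$. The first time $x$ serves a request in $M_n$ I spawn the shadow from $s$ straight to that request (cost $d(s,\cdot)<r^{n+1}$). Whenever $x$ later serves a request $v\in M_n$, I move the shadow from the previous $M_n$-request $u$ of $x$ to $v$; if $u,v$ were consecutive moves of $x$ this costs exactly $O$'s move, and otherwise $x$ made an excursion out of $M_n$ in between while the shadow stayed parked at $u$. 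In either case $d_{M_n}(u,v)=d(u,v)\le \min\{2r^{n+1},\,E\}$, where $E$ is the distance $x$ travels during the excursion (the first bound via the triangle inequality through $s$, the second directly). This is a valid $M_n$-solution serving $\sigma^{(n)}$.

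The cost then splits into spawns, \emph{continuation} moves, and \emph{re-entry} moves. The continuation moves are genuine moves of $O$ with both endpoints in the same ring, so across all rings they are disjoint and sum to at most $\OPT(\sigma)$. The spawn cost is controlled by a geometric series: for a single server $x$ reaching maximal distance $R_x$ from $s$, the first requests it serves in rings $n$ have distance $<r^{n+1}$, and at most $R_x$ in the deepest ring $n^*$, so $\sum_n d(s,q^*_{x,n}) \le R_x + \sum_{n<n^*} r^{n+1} \le \tfrac{2r-1}{r-1}R_x$; summing over $x$ and using $\sum_x R_x \le \OPT(\sigma)$ gives a spawn bound of $\tfrac{2r-1}{r-1}\OPT(\sigma)$.

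The main obstacle is bounding the re-entry moves, i.e.\ $\sum_n\sum_{\text{excursions of ring }n}\min\{2r^{n+1},E\}$, by the remaining budget $\tfrac{r+1}{r-1}\OPT(\sigma)$. Here neither bound in the minimum works alone: charging every re-entry to $2r^{n+1}$ fails because a server oscillating around a ring boundary re-enters arbitrarily often while $O$ pays almost nothing, whereas charging to the excursion travel $E$ fails because a single deep out-and-back is an excursion for every shallower ring it spans and would be counted many times. The plan is to use the two bounds in complementary regimes, exploiting that excursions of different rings are \emph{nested}: the cap $2r^{n+1}$ is used when the excursion is deep relative to $n$, so that over all shallow rings the terms $2r^{n+1}$ form a convergent geometric series charged to the deep penetration of $O$; the travel bound $E$ is used when the excursion is shallow, so that $u,v$ are genuinely close and $d(u,v)\le E$ charges to $O$'s local travel. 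Summing these convergent contributions against $O$'s total movement is where the constant $\tfrac{r+1}{r-1}$, and hence the final factor $\tfrac{4r-1}{r-1}$, emerges; making the nesting bookkeeping precise is the delicate part of the argument.
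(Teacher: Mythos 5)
Your first half is exactly the paper's strategy and is correct as far as it goes: running an independent copy of $\ALG_n$ on each ring reduces everything to the offline inequality $\sum_n \OPT(\sigma_n)\le\frac{4r-1}{r-1}\OPT(\sigma)$, your spawn bound of $\frac{2r-1}{r-1}\sum_x R_x$ is valid, and the moves of $O$ with both endpoints in one ring indeed contribute at most $\OPT(\sigma)$ in total. The gap is that the re-entry (relocation) cost is the entire technical content of the theorem, and you leave it as a ``plan''. The nesting problem you identify is real --- one deep out-and-back of a server is simultaneously an excursion for every ring below its turning point, so charging each re-entry to the full excursion length $E$ overcounts, while charging to $2r^{n+1}$ fails for oscillations across a boundary --- but the complementary-regimes bookkeeping is never carried out, and it is not even clear that your remaining budget of $\frac{r+1}{r-1}$ can be met: the paper spends $\frac{2r}{r-1}$ on exactly this term (it affords this because its deployment bound is only $\frac{r}{r-1}$, obtained by deploying shadows at ring boundaries and folding the walk from the boundary to the first request into the tracking term).

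The paper sidesteps the nesting issue entirely with two devices absent from your write-up. First, it adds virtual points so that whenever $x$ crosses between rings it passes through canonical boundary points at distance exactly $r^i$ from $s$, and the shadow $x_n$ is parked at the \emph{exit boundary point} $p$ rather than at the last request. Second, the relocation $d(p,p')$ upon re-entry is charged not to the whole excursion but only to its \emph{first leg}: the distance $b$ that $x$ travels inside the single adjacent ring $M_{n+u}$ ($u=\pm1$) between entering it at $p$ and next leaving it. If that maximal visit to $M_{n+u}$ ends by returning to $M_n$ at $p'$, then $d(p,p')\le b$ by the triangle inequality; otherwise $x$ crosses to the far boundary of $M_{n+u}$, so $b\ge(r-1)r^{n+1}$ for $u=1$ (resp. $b\ge\frac{r-1}{r}r^{n}$ for $u=-1$) while $d(p,p')\le 2r^{n+1}$ (resp. $2r^{n}$), giving $d(p,p')\le\frac{2r}{r-1}b$ in every case. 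Since each maximal visit of $x$ to a ring is entered from exactly one side, each first leg is charged by at most one relocation over all $n$, so the total relocation cost is at most $\frac{2r}{r-1}$ times the travel of $x$, with no double counting and no case split on excursion depth. To complete your proof you would need either to adopt this charging scheme (and rebalance your constants accordingly) or to actually supply the nesting bookkeeping you defer.
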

\begin{proof}
	Let $\ALG_n$ be a $\rho$-competitive algorithm for the infinite server problem on $(M_n,s)$.
	
	For a request sequence $\sigma$, let $\sigma_n$ be the subsequence of requests in $M_n$. Let $\ALG$ be the algorithm for $(M,s)$ that uses different servers for each of the subsequences $\sigma_n$ and serves them independently according to $\ALG_n$.
	
	The total online cost is $\ALG(\sigma)=\sum_{n} \ALG_n(\sigma_n)\leq \rho \sum_{n}\OPT(\sigma_n)$. To finish the proof, it suffices to show that
	\begin{align}
		\sum_{n}\OPT(\sigma_n) \le \frac{4r-1}{r-1}\OPT(\sigma)\,.\label{eq:redBoundedOpt}
	\end{align}
	Thus, we only need to analyze the offline cost. We do this for each offline server separately. Fix some offline server
	$x$. Let $N_0$ and $N_1$ be the minimal and maximal values of $n$ such that $x$ visits $M_n$. We can assume without loss of generality (by adding virtual points to the metric space) that whenever $x$ moves from $M_n$ to $M_{n'}$ for some $n<n'$, it travels across points $p_{n+1},p_{n+2},\dots,p_{n'}$ with $d(s,p_i)=r^{i}$, and similarly for $n>n'$.
	
	The movements of server $x$ can be tracked by many servers, one
	server $x_n$ in every set $M_n$ for $N_0\leq n\leq
	N_1$. When server $x$ is in $M_n$, server $x_n$ is
	exactly at the same position tracking the movement of $x$. When server $x$ exits $M_n$ at some point $p$ at the boundary to $M_{n-1}$ or $M_{n+1}$, server $x_n$ freezes at $p$.
	The movement cost of $x_n$ can be partitioned into the cost of deploying $x_n$ at the first point visited in $M_n$, the tracking cost within $M_n$, and the cost of of relocating $x_n$ whenever $x$ re-enters $M_n$ at a location different from the last exiting location.
	
	The total tracking cost of all servers $x_n$ is bounded by the distance traveled by $x$. The cost of deploying all
	servers $x_n$ is $\sum_{n=N_0}^{N_1}r^{n}\le\sum_{n=-\infty}^{N_1}r^{n}=r^{N_1+1}/(r-1)$, which is at
	most $\frac{r}{r-1}$ times the total movement of server $x$, because the
	latter is at least $r^{N_1}$.
	
	To bound the relocating cost, say $x$ exits $M_n$ at $p$ and re-enters it at $p'$. Then $p$ and $p'$ are at the boundary of $M_n$ and $M_{n+u}$ for $u\in\{-1,+1\}$. Let $b$ be the distance traveled by $x$ in $M_{n+u}$ between the times when it is entered at $p$ and when it is next exited. If this exiting is at $p'$, then the relocating cost $d(p,p')$ is at most $b$ by the triangle inequality. Otherwise, $x$ exits $M_{n+u}$ at a point $p''$ at the boundary of $M_{n+u}$ and $M_{n+2u}$. If $u=1$, then $d(p,p')\le d(s,p)+d(s,p')= 2r^{n+1}$ and $b\ge d(p,p'')\ge d(s,p'') - d(s,p) = r^{n+2} - r^{n+1}=(r-1)r^{n+1}$. If $u=-1$, then $d(p,p')\le d(s,p)+d(s,p')= 2r^{n}$ and $b\ge d(p,p'')\ge d(s,p) - d(s,p'') = r^{n} - r^{n-1}=\frac{r-1}{r}r^{n}$. In both cases, the relocating cost $d(p,p')$ is at most $\frac{2r}{r-1}b$. Thus, the total relocating cost of all servers $x_n$ is at most $\frac{2r}{r-1}$ times the total distance traveled by $x$.
	
	Thus, the sum of deployment, tracking and relocating cost of the servers $x_n$ is at most $\frac{4r-1}{r-1}$ times the distance traveled by $x$. This shows \eqref{eq:redBoundedOpt}, giving the
	statement of the theorem.
\end{proof}

The last theorem can also be slightly generalized to the case where instead of \emph{strict} $\rho$-competitiveness, an additive term proportional to $r^n$ is allowed. It is not difficult to show the following specialization for the line, where the premise can be weakened to require competitiveness only on a single interval:

\begin{corollary}
	Let $0<a<b$. The infinite server problem is competitive on the line if and only if it is competitive on $(\{0\}\cup[a,b],0)$.
\end{corollary}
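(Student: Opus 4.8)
The plan is to prove both directions, the hard one being a three-stage reduction: bounded interval $\to$ scaled rings $\to$ half-line (via Theorem~\ref{thm:redBounded}) $\to$ full line (by gluing two half-lines). Throughout I use the generalized form of Theorem~\ref{thm:redBounded} noted above, which tolerates an additive term proportional to $r^n$ on the ring $M_n$, since a bounded interval is not scale-invariant and strict competitiveness will be unavailable on the distant rings.

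For the \emph{only if} direction, observe that $(\{0\}\cup[a,b],0)$ is isometric to the subset $\{0\}\cup[a,b]$ of the real line. Take a $\rho$-competitive algorithm for the line and, without loss of generality, make it lazy. A lazy algorithm parks every server either at the source or at a previously requested point, all of which lie in $\{0\}\cup[a,b]$ when every request does. Hence its server positions never leave the subspace, and both $\ALG$ and $\OPT$ take the same values computed on the line or on the subspace (distances and geodesics agree; the optimal offline solution gains nothing from the omitted points of $(0,a)$). The same $\rho$ therefore works on $(\{0\}\cup[a,b],0)$.

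For the \emph{if} direction, assume $\rho$-competitiveness on $(\{0\}\cup[a,b],0)$ and set $r:=b/a>1$. Scaling the metric by a factor $t$ converts a $\rho$-competitive algorithm on $\{0\}\cup[a,b]$ into a $\rho$-competitive algorithm on $\{0\}\cup[ta,tb]$ whose additive constant is scaled by $t$. Taking $t=r^n/a$ (so that $tb=r^{n+1}$) yields $\rho$-competitiveness on $\{0\}\cup[r^n,r^{n+1}]$ with an additive term proportional to $r^n$; the ring $M_n=\{0\}\cup[r^n,r^{n+1})$ of Theorem~\ref{thm:redBounded} applied to $M=[0,\infty)$ is a subspace of this set, so by the same laziness observation the problem is $\rho$-competitive on each $M_n$ with additive term $O(r^n)$. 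The generalized Theorem~\ref{thm:redBounded} then gives a competitive algorithm on the half-line $[0,\infty)$. To reach the whole line, I run two independent copies with disjoint server pools sharing the source $0$: the half-line algorithm on $[0,\infty)$ for positive requests and its reflection on $(-\infty,0]$ for negative ones (requests at $0$ cost nothing). The combined online cost is the sum of the two, and the crucial estimate is $\OPT(\sigma_{\ge0})+\OPT(\sigma_{\le0})\le\OPT(\sigma)$ on the line: decomposing each offline trajectory at its crossings of the source, a dedicated right-server shadows the line-server while it lies in $[0,\infty)$ and parks at $0$ otherwise (re-deploying from the source at no extra cost, exactly as in the tracking/relocation accounting of Theorem~\ref{thm:redBounded}), and symmetrically on the left; since the total variation of the position splits as $\operatorname{tv}(\max(x,0))+\operatorname{tv}(\min(x,0))$, the two sides partition the offline movement. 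This yields competitiveness on the line.

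I expect the main obstacle to be the additive-term bookkeeping: because $\{0\}\cup[a,b]$ is not invariant under scaling, the best one obtains on the ring $M_n$ is $\rho$-competitiveness with an additive constant growing like $r^n$, so one cannot apply the literally-stated (strict) Theorem~\ref{thm:redBounded} and must instead invoke its generalization. The secondary technical point is the offline-decomposition inequality $\OPT(\sigma_{\ge0})+\OPT(\sigma_{\le0})\le\OPT(\sigma)$, where one must ensure that movement through the source is charged to exactly one side.
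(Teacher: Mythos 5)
Your proof is correct and follows the route the paper intends: the corollary is stated immediately after the remark that Theorem~\ref{thm:redBounded} generalizes to additive terms proportional to $r^n$, and your argument (restriction via laziness for the easy direction; scaling $[a,b]$ onto the rings with $r=b/a$, invoking the generalized theorem for the half-line, and splitting offline trajectories at the source to glue two half-lines) is exactly that reduction. You also correctly identify the one genuine subtlety, namely that $\{0\}\cup[a,b]$ is not scale-invariant, so the strict version of Theorem~\ref{thm:redBounded} cannot be applied literally.
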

%\begin{proof}
%	The nontrivial direction of the equivalence is a consequence of the previous theorem. Choosing $r=b/a$, each space $M_n$ is the union of a scaled version of $\{0\}\cup[a,b]$ with its negative mirror image.
%\end{proof}

Another consequence of Theorem~\ref{thm:redBounded} is a reduction to spaces where the source is at a uniform distance from all other points.

\begin{corollary}
	Suppose there exists $\rho$ so that the infinite server problem is strictly $\rho$-competitive on any metric space where the distance from the source to any other point is the same. Then the infinite server problem on general metric spaces is competitive.
\end{corollary}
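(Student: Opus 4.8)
The plan is to invoke Theorem~\ref{thm:redBounded}, which reduces competitiveness on a general space $M$ to \emph{strict} $\rho'$-competitiveness (with a single constant $\rho'$) on every ring $M_n=\{s\}\cup\{p\mid d(s,p)\in[r^n,r^{n+1})\}$, where $r>1$ is a parameter we are free to fix (say $r=2$). Thus it suffices to manufacture, out of the hypothesised algorithm for uniform-source-distance spaces, a strictly $O(\rho)$-competitive algorithm on each ring, with a ratio that does not depend on $n$.

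The key observation is that inside a single ring all source distances lie in $[r^n,r^{n+1})$ and therefore agree up to the factor $r$. I would exploit this by attaching to each ring $M_n$ a uniform-source-distance space $U_n$: keep the same non-source points with the same pairwise distances as in $M_n$, but relocate the source to the common distance $r^{n+1}$ from every point. First I would verify that $U_n$ is a genuine metric space. The only nontrivial triangle inequalities are those through the source; for non-source points $p,q$ the binding condition is $d(p,q)\le 2r^{n+1}$, and this holds because already in $M_n$ we have $d(p,q)\le d(s,p)+d(s,q)<2r^{n+1}$. (Uniformizing instead to the smaller value $r^n$ could violate this inequality, which is exactly why I inflate all source distances to the maximum $r^{n+1}$.) By the hypothesis of the corollary there is a strictly $\rho$-competitive algorithm $\ALG_n$ for the infinite server problem on $U_n$, and, crucially, the same $\rho$ serves every $n$.

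Next I would transfer $\ALG_n$ to $M_n$ by feeding it the identical request sequence and copying every move. Since $M_n$ and $U_n$ share the same non-source points and pairwise distances, while every source distance in $M_n$ is at most the uniform source distance $r^{n+1}$ of $U_n$, each individual move costs no more in $M_n$ than in $U_n$; hence the simulated algorithm's cost on $M_n$ is at most $\ALG_n$'s cost on $U_n$. Conversely, running an optimal $M_n$-solution inside $U_n$ inflates only the source-involving moves, each by a factor at most $r^{n+1}/r^n=r$, so $\OPT_{U_n}(\sigma)\le r\,\OPT_{M_n}(\sigma)$. Chaining these with the strict $\rho$-competitiveness of $\ALG_n$ gives a strictly $\rho r$-competitive algorithm on each $M_n$. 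Applying Theorem~\ref{thm:redBounded} with competitiveness constant $\rho r$ then yields a strictly $\frac{4r-1}{r-1}\rho r$-competitive algorithm on $M$, so the infinite server problem is competitive on $M$.

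The arguments are routine once the uniformized space is set up; the only points demanding genuine care are the metric validity of $U_n$ (which dictates the choice of $r^{n+1}$ over $r^n$) and the bookkeeping ensuring that each cost comparison between $M_n$ and $U_n$ loses at most the aspect-ratio factor $r$. I do not anticipate a real obstacle here, since the bounded aspect ratio of a ring is precisely what makes both the metric-validity check and the factor-$r$ cost comparisons succeed.
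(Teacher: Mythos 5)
Your proposal is correct and follows essentially the same route as the paper: uniformize each ring $M_n$ by raising all source distances to $r^{n+1}$ (losing a factor of at most $r$), apply the hypothesis on the resulting uniform-source-distance space, and conclude via Theorem~\ref{thm:redBounded}. The paper states this in one sentence; your added checks (metric validity of $U_n$ and the two cost comparisons) are the right details to verify.
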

\begin{proof}
	Follows from Theorem~\ref{thm:redBounded} by increasing the distance from $s$ to the other points in $M_n$ to $r^{n+1}$, making a multiplicative error of at most $r$.
\end{proof}

\section{Open Problems}\label{sec:concl}
The most obvious open problem is whether the infinite server problem is competitive on general metric spaces. A challenging special case is to resolve the question for the real line. Similarly, improving the MOO 
algorithm and settling the question for layered graphs remains open. It would also be interesting to find a metric space with a competitive ratio greater than $3.146$. 
Another possible line of research is to consider randomized algorithms.

\bibliography{bibliography}{}

\begin{thebibliography}{10}

\bibitem{BansalEJK17}
Nikhil Bansal, Marek Eli{\'a}{\v s}, {\L}ukasz Je{\.z}, and Grigorios
  Koumoutsos.
\newblock The $(h,k)$-server problem on bounded depth trees.
\newblock In {\em Proceedings of the {Twenty}-{Eighth} {Annual} {ACM}-{SIAM}
  {Symposium} on {Discrete} {Algorithms}, 2017}, pages 1022--1037. SIAM, 2017.

\bibitem{BansalEJKP15}
Nikhil Bansal, Marek Eli{\'a}{\v s}, {\L}ukasz Je{\.z}, Grigorios Koumoutsos,
  and Kirk Pruhs.
\newblock Tight bounds for double coverage against weak adversaries.
\newblock In {\em International {Workshop} on {Approximation} and {Online}
  {Algorithms}}, pages 47--58. Springer, 2015.

\bibitem{bartal2000harmonic}
Yair Bartal and Eddie Grove.
\newblock The harmonic k-server algorithm is competitive.
\newblock {\em Journal of the ACM (JACM)}, 47(1):1--15, 2000.

\bibitem{BartalK04}
Yair Bartal and Elias Koutsoupias.
\newblock On the competitive ratio of the work function algorithm for the
  k-server problem.
\newblock {\em Theoretical Computer Science}, 324(2-3):337--345, September
  2004.

\bibitem{BorodinE98}
Allan Borodin and Ran El-Yaniv.
\newblock {\em Online computation and competitive analysis}.
\newblock Cambridge University Press, 2005.

\bibitem{ChrobakKPV91}
Marek Chrobak, Howard Karloff, Tom Payne, and Sundar Vishwanathan.
\newblock New results on server problems.
\newblock {\em SIAM Journal on Discrete Mathematics}, 4(2):172--181, 1991.

\bibitem{ChrobakL91}
Marek Chrobak and Lawrence~L. Larmore.
\newblock An optimal on-line algorithm for k servers on trees.
\newblock {\em SIAM Journal on Computing}, 20(1):144--148, 1991.

\bibitem{chrobak1992server}
Marek Chrobak and Lawrence~L Larmore.
\newblock The server problem and on-line games.
\newblock In {\em On-line Algorithms, volume 7 of DIMACS Series in Discrete
  Mathematics and Theoretical Computer Science}. Citeseer, 1992.

\bibitem{cohen2014pricing}
Ilan~R. Cohen, Alon Eden, Amos Fiat, and {\L}ukasz Je{\.z}.
\newblock Pricing online decisions: Beyond auctions.
\newblock In {\em Proceedings of the twenty-sixth annual ACM-SIAM symposium on
  discrete algorithms}, pages 73--91. SIAM, 2014.

\bibitem{CsirikINSW01}
J{\'{a}}nos Csirik, Csan{\'{a}}d Imreh, John Noga, Steve~S. Seiden, and
  Gerhard~J. Woeginger.
\newblock Buying a constant competitive ratio for paging.
\newblock In {\em European {Symposium} on {Algorithms}}, pages 98--108.
  Springer, 2001.

\bibitem{emek2009additive}
Yuval Emek, Pierre Fraigniaud, Amos Korman, and Adi Ros{\'e}n.
\newblock On the additive constant of the k-server work function algorithm.
\newblock In {\em International Workshop on Approximation and Online
  Algorithms}, pages 128--134. Springer, 2009.

\bibitem{fiat1990competitive}
Amos Fiat, Yuval Rabani, and Yiftach Ravid.
\newblock Competitive k-server algorithms.
\newblock In {\em Foundations of {Computer} {Science}, 1990. {Proceedings}.,
  31st {Annual} {Symposium} on}, pages 454--463. IEEE, 1990.

\bibitem{IraniR91}
Sandy Irani and Ronitt Rubinfeld.
\newblock A competitive 2-server algorithm.
\newblock {\em Information Processing Letters}, 39(2):85--91, 1991.

\bibitem{jain}
Kamal Jain.
\newblock Personal Communication.

\bibitem{Koutsoupias99}
Elias Koutsoupias.
\newblock Weak adversaries for the k-server problem.
\newblock In {\em Foundations of {Computer} {Science}, 1999. 40th {Annual}
  {Symposium} on}, pages 444--449. IEEE, 1999.

\bibitem{Koutsoupias09}
Elias Koutsoupias.
\newblock The k-server problem.
\newblock {\em Computer Science Review}, 3(2):105--118, 2009.

\bibitem{KoutsoupiasP96}
Elias Koutsoupias and Christos Papadimitriou.
\newblock The 2-evader problem.
\newblock {\em Information Processing Letters}, 57(5):249--252, 1996.

\bibitem{KoutsoupiasP95}
Elias Koutsoupias and Christos~H. Papadimitriou.
\newblock On the k-server conjecture.
\newblock {\em Journal of the ACM (JACM)}, 42(5):971--983, 1995.

\bibitem{ManasseMS88}
Mark Manasse, Lyle McGeoch, and Daniel Sleator.
\newblock Competitive algorithms for on-line problems.
\newblock In {\em Proceedings of the twentieth annual {ACM} symposium on
  {Theory} of computing}, pages 322--333. ACM, 1988.

\bibitem{SleatorT85}
Daniel~D. Sleator and Robert~E. Tarjan.
\newblock Amortized efficiency of list update and paging rules.
\newblock {\em Communications of the ACM}, 28(2):202--208, 1985.

\bibitem{Young94}
Neal Young.
\newblock The k-server dual and loose competitiveness for paging.
\newblock {\em Algorithmica}, 11(6):525--541, 1994.

\end{thebibliography}
\bibliographystyle{plainurl}

\end{document}